\documentclass[lettersize,journal]{IEEEtran}
\usepackage{amsmath,amsfonts}
\usepackage{algorithm}
\usepackage{algpseudocode}
\usepackage{array}
\usepackage{amsmath}
\usepackage{subcaption}
\usepackage{booktabs}
\usepackage{textcomp}
\usepackage{stfloats}
\usepackage{url}
\usepackage{verbatim}
\usepackage{graphicx}
\usepackage{cite}
\usepackage{xcolor}
\usepackage{bm}
\usepackage{amsthm}
\usepackage{amssymb}
\usepackage{multirow}

\newtheorem{theorem}{Theorem}
\newtheorem{lemma}{Lemma}

\newtheorem{remark}{Remark}

\begin{document}

\title{Point-Cloud-Assistant Localized Statistical Channel Prediction by Tangent Gaussian Splatting}
\author{Ye~Xue,~\IEEEmembership{Member,~IEEE}, Yiheng~Wang,~\IEEEmembership{Graduate Student Member,~IEEE,} Xinhua Shao, Qi~Yan, Shutao~Zhang,~\IEEEmembership{Graduate Student Member,~IEEE}, and~Tsung-Hui~Chang,~\IEEEmembership{Fellow,~IEEE}
 \thanks{
    The authors have provided public access to their code at
https://github.com/chenjiadie/PC-TGS;\\

The work was supported in part by the National Key Research and
Development Program of China under Grant 2024YFA1014202; in part by the
National Natural Science Foundation of China under Grant 62301334, Grant
62031008; in part by the Shenzhen Science and Technology Program under Grant RCIC20210609104448114 and Grant ZDSYS20230626091302006; in part by Hetao Shenzhen-Hong
Kong Science and Technology Innovation Cooperation Zone Project (No.
HZOSWSKCCYB-2024016); in part by the Guangdong Major Project of
Basic and Applied Basic Research (No. 2023B0303000001); \\

   Ye Xue is with the School of Intelligent Systems Engineering, Shenzhen Campus of Sun Yat-sen University, Shenzhen 518107, China  (e-mail: xuey57@mail.sysu.edu.cn). \\
   Yiheng~Wang and Tsung-Hui~Chang  are with the Shenzhen Research Institute of Big Data,  and also with the School of Artificial Intelligence, The Chinese University of Hong Kong-Shenzhen, Guangdong 518172, China (email: yihengwang1@link.cuhk.edu.cn, tsunghui.chang@ieee.org).\\
   Xinhua Shao is with China Telecom, China. (shaoxh@chinatelecom.cn). \\
  Qi~Yan  and Shutao~Zhang  are with Networking and User Experience Lab, Huawei Technologies, China (email: yanqi1@huawei.com, zhangshutao2@huawei.com).\\
  }}
  
\maketitle

\begin{abstract}
Accurate, site-specific channel information is crucial for optimizing next-generation wireless networks. Among various approaches, localized statistical channel modeling (LSCM), which models the channel multipath angular power spectrum (APS) from the reference signal received power (RSRP) measurement, has emerged as a state-of-the-art method tailored for efficient network optimization. However, despite its effectiveness, LSCM  cannot predict APS at the vast majority of locations where no measurements are available, which significantly restricts its applicability in large-scale, real-world scenarios. To address this challenge, we present \emph{point-cloud-assisted tangent Gaussian splatting} (PC-TGS), the first framework to \emph{extrapolate} APS to unmeasured outdoor grids by integrating sparse radio measurements with dense LiDAR-based geometry. PC-TGS represents environmental scatterers as anisotropic 3D Gaussians, initialized and refined through a relaxed-mean reparameterization of the raw point cloud. A tangent-plane projection accurately maps each Gaussian into the local angular domain, while a depth-aware electromagnetic splatting process aggregates their contributions. To ensure practical deployment, we derive a closed-form Gaussian-weighted average (GWA) for APS bin integration and provide a provable error bound. { Evaluations on a LiDAR-scanned city-scale dataset (5M points, 6,310 RSRP samples) demonstrate that PC-TGS achieves better APS and RSRP prediction performance compared to state-of-the-art baselines  and faster inference time for APS extrapolation task. These results highlight the potential of PC-TGS to enable geometry-aware and data-efficient channel prediction in large-scale wireless digital twins.}
\end{abstract}

\begin{IEEEkeywords}
Channel Modeling, Radio Environment,  3D Gaussian Splatting, Point Cloud.
\end{IEEEkeywords}

\section{Introduction}

With the rapid development of next-generation mobile communication systems, wireless network optimization has become a fundamental approach for improving the quality of user experience~\cite{SRCON,li2022real}. Despite significant investments by network operators, network performance can deteriorate if system parameters are not properly tuned to the real environment. In modern networks, the challenge is compounded by the proliferation of parameters, including those related to antennas, handovers, beams, and carrier settings—all of which must be meticulously configured to fully realize the potential of wireless network systems.

Traditionally, network optimization has relied on engineering heuristics and repeated field measurements such as drive tests (DTs)~\cite{6353680}. However, these methods are time-consuming, costly, environmentally unsustainable, and limited to assessing only the current network configuration. To address these challenges and reduce operational risks, digital twin-based network optimization~\cite{10937314,10697404, 10.1145/3702644} has recently gained attention, enabling the modeling and simulation of wireless networks without manual intervention~\cite{jhzdt}. Despite these advances, a significant gap remains between the simulation and the real-world radio environment, primarily due to the difficulty of accurately capturing site-specific environmental effects and complex multi-path phenomena. Without precise modeling of the true radio environment, the performance gains achieved in simulation may fail to materialize in practice.

Existing channel models commonly used in 5G networks are not sufficient for this task. Geometry-based stochastic models (GBSM)~\cite{wang2021novel} are typically designed for broad scenario types, such as indoor, outdoor, rural, or urban, but lack the ability to represent localized environmental features. Deterministic models, such as ray tracing~\cite{10.1109/MCOM.001.2200486}, provide site-specific accuracy but require detailed map information and incur prohibitive computational costs. Simplified empirical models, such as the COST231-Hata model~\cite{1543252}, are computationally efficient but cannot characterize multi-path effects or sufficient channel properties for network optimization.

To address these limitations, \emph{localized statistical channel modeling} (LSCM)~\cite{LSCM,GNNLSCM} has emerged as a promising alternative tailored for efficient network optimization. Unlike traditional approaches that rely on high-dimensional channel impulse response (CIR) measurements, LSCM uses low-overhead reference signal received power (RSRP) data from multiple transmit beams obtained from the network layer. This enables the efficient estimation of the angular power spectrum (APS) at measured locations, revealing localized multi-path structure and environmental topography, while significantly reducing data collection requirements.

Despite its effectiveness, a critical and often misunderstood limitation of LSCM is its lack of spatial generalization: it can only reconstruct APS at locations where RSRP measurements are directly available. LSCM cannot interpolate or extrapolate APS to unmeasured regions. While one can interpolate RSRP at unmeasured points using spatial methods (e.g., Kriging~\cite{sato2020radio}) and then apply LSCM, this approach enables only mild interpolation and relies on the presence of nearby measurements. True extrapolation that predicts APS in regions with no measurements at all remains fundamentally unsolved and requires strong environmental priors and model generalization. This limitation significantly restricts the utility of LSCM for large-scale simulation, network planning, and digital twin construction.

To overcome this barrier, we introduce \emph{localized statistical channel prediction} (LSCP): extrapolating the APS at unmeasured locations by leveraging both low-overhead RSRP data and environmental priors (such as  LiDAR point clouds). Unlike LSCM, LSCP aims to infer the spatial distribution of APS everywhere, including regions without any signal measurements. In LSCM, APS is recovered at a location with its own RSRP, but in LSCP, the model must predict APS at completely unmeasured locations, transferring knowledge from measured regions to the unknown. This demands a model capable of true spatial extrapolation, robust generalization, and effective use of environmental structure.

{\subsection{Related Work}

To contextualize the LSCP problem, we review three lines of related research. We begin with LSCM and sparse recovery methods that estimate APS at measured locations, then discuss radio map prediction methods that attempt spatial generalization of signal strength, and finally examine physics-based and neural scene representations that model wireless propagation.

\subsubsection{APS Recovery at Measured Locations}

LSCM~\cite{LSCM,GNNLSCM} recovers the APS $\bm{x}$ from low-overhead RSRP measurements by solving an ill-conditioned linear inverse problem $\bm{y}=\bm{A}\bm{x}$ at locations where RSRP $\bm{y}$ and system parameters $\bm{A}$ are known. Within this framework, several sparse recovery algorithms can be adopted. WNOMP~\cite{LSCM} uses greedy orthogonal matching pursuit, while  {temporally correlated MSBL (TMSBL)~\cite{zhang2011sparse}} provides a principled probabilistic framework for the multiple measurement vector (MMV) problem, offering improved accuracy by grouping spatially or signal-similar samples to exploit joint sparsity with a higher computational cost. Although effective at measured locations, all these methods fundamentally operate only where measurements exist. Hence, they cannot generalize APS predictions to unmeasured regions. Even when combined with spatial interpolation of RSRP (e.g., via Kriging), the recovered APS remains constrained to mild interpolation and degrades rapidly in measurement-free areas.

\subsubsection{Spatial Radio Map Prediction}

A complementary line of research focuses on predicting radio signal strength at unmeasured locations. Classical spatial interpolation methods such as inverse distance weighting (IDW)~\cite{kuo2010discriminant} and Kriging~\cite{sato2020radio} estimate RSRP using proximity-weighted averaging, while deep learning approaches such as RadioUNet~\cite{levie2021radiounet} leverage convolutional neural networks to predict path-loss maps from  building geometry. However, all these methods predict signal power values rather than the high-dimensional channel APS, and thus cannot capture the angular multipath structure essential for network optimization tasks such as beamforming and antenna configuration. Consequently, they address a  less informative prediction target than the proposed LSCP.

\subsubsection{Physics-Based and Neural Scene Representation Methods}

Deterministic ray tracing~\cite{eertmans2025differt} provides site-specific APS prediction by simulating electromagnetic wave propagation through a geometric scene model. While physics-grounded and capable of producing the full APS, ray tracing relies on precise material parameters and incurs high per-query computational cost, limiting its scalability for large-scale network optimization.

Recent advances in neural scene representation have adapted techniques from computer vision to wireless channel modeling. NeRF-based models such as NeRF$^2$~\cite{nerf2} and NeWRF~\cite{newrf} reconstruct radio maps or channel impulse responses (CIRs) by capturing fine spatial and multipath effects. However, these approaches are computationally intensive and, crucially, require dense ground-truth APS or CIR labels for training which can  rarely be satisfied in practical outdoor deployments. To improve efficiency, 3D Gaussian Splatting (3DGS)~\cite{3dgs} and its wireless adaptation WRF-GS~\cite{wen2025wrf} offer real-time performance with anisotropic Gaussian basis functions~\cite{3dgs2,3dgs3,3dgs4}, but still assume access to dense APS data for training and do not leverage environmental geometry for beamspace representation, limiting their applicability to controlled indoor settings.

In summary, no existing method can directly address the LSCP task which requires extrapolating APS at unmeasured locations using only low-dimensional sparse RSRP and environmental priors. }

\subsection{Contribution}

To address the aforementioned challenges, we propose a unified, environment-aided 3DGS-based framework called \emph{point-cloud-assisted tangent Gaussian splatting} (PC-TGS), which enables geometry-aware APS extrapolation at unmeasured outdoor locations by integrating sparse radio measurements with dense LiDAR-derived environmental priors. The main contributions of this work are as follows:

\begin{itemize}
    \item \textbf{localized statistical channel prediction (LSCP) framework based on 3D Gaussian Splatting (3DGS):} We formally define the localized statistical channel prediction (LSCP)
    problem and propose PC-TGS, a framework that repurposes 3D Gaussian
    splatting for the radio domain with a rigorous physical foundation.
    Each 3D Gaussian ellipsoid serves as a \emph{physical scatterer proxy},
    whose position and anisotropic covariance $(\boldsymbol{\mu},
    \boldsymbol{\Sigma})$ encode both the location and effective scattering
    area of an environmental object, while its spherical harmonic (SH) coefficients capture
    angle-dependent scattering intensity consistent with electromagnetic
    scattering principles. Furthermore, the splatting accumulation is
    reinterpreted as \emph{multipath superposition}: Gaussian opacities
    represent per-scatterer signal attenuation, and the rendering
    aggregation sums multipath components arriving at the receiver,
    replacing the RGB alpha blending of visual 3DGS with a physically
    meaningful signal composition pipeline. This dual reinterpretation
    establishes that PC-TGS is not a direct application of vision-based
    3DGS but a fundamentally redesigned framework grounded in radio
    propagation physics.
    \item \textbf{Efficient point cloud-based scatterer modeling:} We tackle the significant challenges of using real-world LiDAR data, which is typically both extremely dense and inherently noisy. Our proposed relaxed-mean reparameterization algorithm efficiently selects and refines a compact set of representative virtual scatterers directly from the raw point cloud, filtering out irrelevant and noisy points. This process preserves the most informative geometric features for radio propagation modeling, suppresses measurement noise, and reduces the computational load, ultimately enabling PC-TGS to robustly bridge real environmental structure with wireless multipath effects even in the presence of imperfect sensing data.

    \item \textbf{Tangent-plane projection and GWA-based APS synthesis:} We introduce a scalable tangent-plane projection that rigorously aligns environmental scatterers with the antenna angular domain, which is fundamentally different from the camera-based projections in 3DGS used in vision tasks. To further ensure efficient and accurate APS computation, we derive a closed-form Gaussian weighted average (GWA) bin integration formula, together with a provable error bound, enabling fast and mathematically principled APS synthesis on large-scale datasets.

    \item \textbf{Unified, robust, and scalable pipeline:} By jointly integrating sparse radio measurements, dense point cloud priors, and differentiable electromagnetic modeling into a single end-to-end framework, PC-TGS achieves robust and physically interpretable APS prediction in realistic, data-sparse environments. Extensive experiments on city-scale datasets demonstrate that PC-TGS consistently outperforms state-of-the-art baselines in both accuracy and robustness, while delivering practical inference speed.
\end{itemize}

The remainder of this paper is organized as follows.  
Section~\ref{sec:system} summarizes the preliminaries of LSCP and reviews the 3DGS technique.  
Section~\ref{sec:propose} details the proposed PC-TGS framework, including relaxed-mean reparameterization, attribute construction, tangent-plane projection, electromagnetic splatting,  {and derives an efficient GWA-based integration formula with a provable error bound}.  
Section~\ref{sec:experiments} reports extensive experiments that compare PC-TGS with state-of-the-art alternatives.  
Finally, Section~\ref{sec:conclusion} concludes the paper and outlines promising future research directions.

\textit{Notations:} {$\mathbb R$, $\mathbb C$ and $\mathbb{B}$ stand for the set of real numbers, the set of complex numbers, and the set of binary numbers (0 and 1), respectively. $[N]$ is the set of integers $\{1,2,\ldots,N\}$. The operators $(\cdot)^\textnormal{T}$, $\|\cdot\|_0$, $\|\cdot\|_1$, and $\|\cdot\|_2$ represent transpose, $\ell_0$-Norm, $\ell_1$-Norm, and the $\ell_2$-Norm respectively. $\mathbb E[\cdot]$ denotes the expectation.  The \emph{multivariate} Gaussian distribution with mean vector $\bm{\mu} \in \mathbb{R}^d$ and covariance matrix $\bm{\Sigma} \in \mathbb{R}^{d \times d}$ is denoted by $\mathcal{N}(\bm{x};\bm{\mu},\bm{\Sigma})$.}

\section{Preliminaries}\label{sec:system}

\subsection{Localized Statistical Channel Prediction}
\label{Subsection: LSCM}

Accurately predicting the spatial distribution of wireless channel characteristics is a fundamental challenge in next-generation network design. 
Fig.~\ref{fig:Task} illustrates the overall system scenario and the channel prediction problem considered in this work.

\begin{figure}
\begin{center}
\centerline{\includegraphics[width=1\columnwidth]{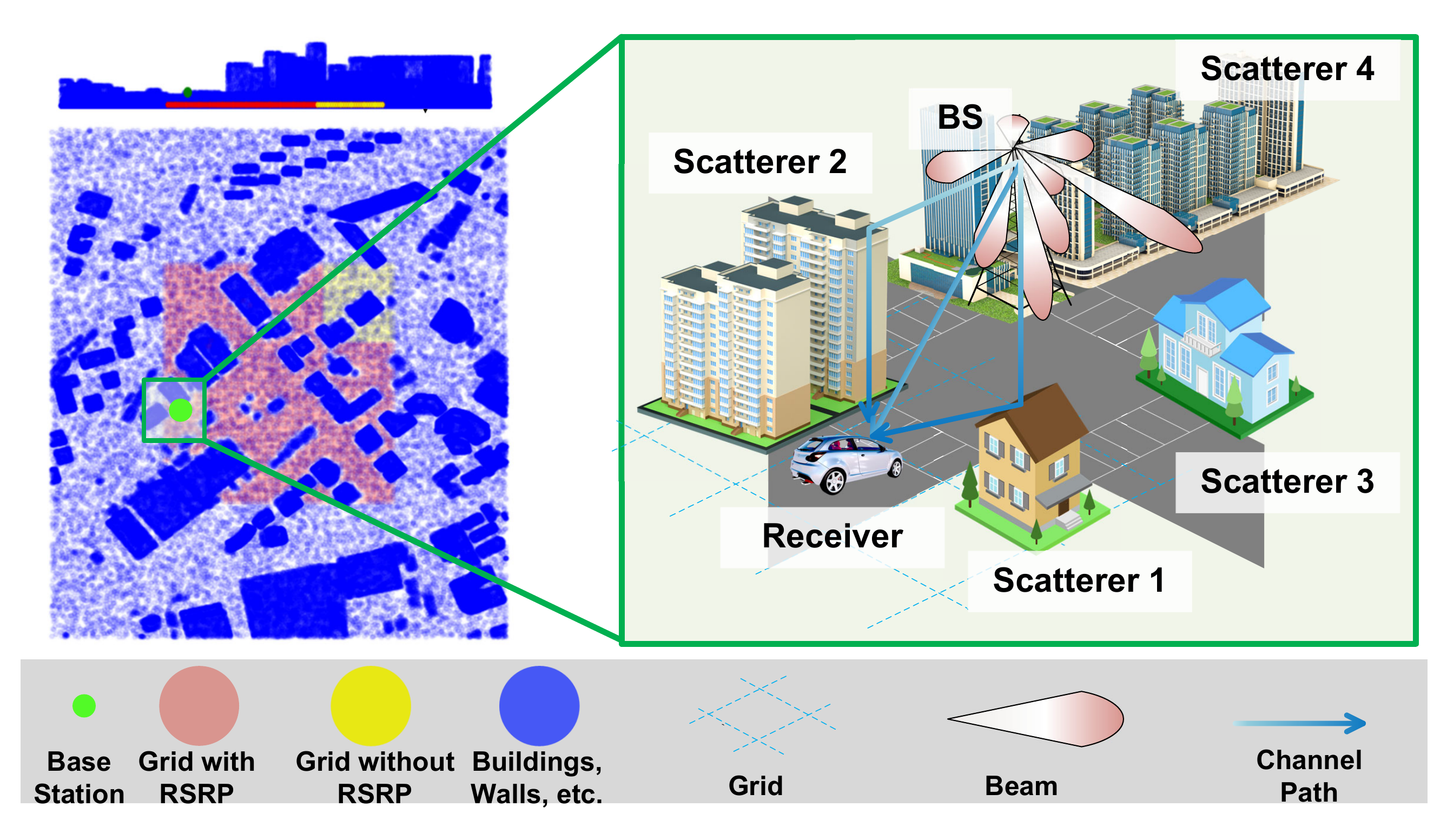}}
\caption{%
Illustration of the system scenario and channel modeling task.
\textbf{Left:} Front and top views of a 3D point cloud of an outdoor environment in City A of China, showing the spatial distribution of buildings and measurement grids. Grids with available RSRP measurements (red) and unmeasured grids (yellow) are highlighted, along with the location of the base station (green).
\textbf{Right:} The base station broadcasts the multi-beam reference signals, which are then distorted by major environmental scatterers. Then, the receiver obtains the reference signal and reports the corresponding signal power. 
}
\label{fig:Task}
\end{center}
\end{figure}

Specifically, we consider a wireless communication network where the base station (BS) is equipped with a uniform planar array (UPA) of $N_T = N_X \times N_Y$ antennas. The tilt and azimuth angles of departure (AoDs) are discretized uniformly into $N_V$ and $N_H$ values, denoted as $\theta_p$ ($1 \le p \le N_V$) and $\varphi_q$ ($1 \le q\le N_H$), respectively. The downlink channel coefficient for the $(x, y)$-th antenna is modeled as follows~\cite{38901}:
\begin{align}
	h_{x, y} = & \sum_{p = 1}^{N_V} \sum_{q = 1}^{N_H}  \sqrt{\beta_{p,q} } \times g_{p, q} \times e^{-j2\pi\frac{d_x x}{\lambda}\cos \theta_p\sin \varphi_q}  \nonumber\\
    &  \quad \qquad \times e^{-j2\pi\frac{d_y y }{\lambda}\sin \theta_p} \times e^{-j\omega_{p,q}-j\omega_{x,y}},
\end{align}
where $\beta_{p,q}$ is the channel power of the path in the AoDs $(\theta_p,\varphi_q)$, $g_{p,q}$ is the antenna gain, $\lambda$ is the carrier wavelength, $d_x$ and $d_y$ denote antenna spacings, and $\omega_{p,q}$, $\omega_{x,y}$ represent random phase errors.

In downlink, the BS transmits $M$ directional reference signal beams that sweep across the angular domain. The precoding matrix for the $m$-th beam is $\boldsymbol{B}^{(m)} \in \mathbb{C}^{N_X \times N_Y}$, with $B_{x, y}^{(m)} = e^{j\phi_{x,y}^{(m)}}$. The received signal strength (RSRP) for beam $m$ at a user location is then
\begin{align}
	rsrp^{(m)} =  P_{tx}\left|\sum_{x=1}^{N_X}\sum_{y=1}^{N_Y} h_{x, y} B_{x, y}^{(m)}\right|^{2},
\end{align}
where $P_{tx}$ is the transmit power.

Note that $rsrp^{(m)}$ depends on three random variables: $\omega_{p,q}$, $\omega_{x,y}$, and $\beta_{p,q}$. By taking the expectation over the random phase errors, we relate the measured RSRP to the angular power spectrum (APS) $\bm{x} = \mathbb{E}[\boldsymbol{\beta}]$, where $\boldsymbol{\beta} = [\beta_{1,1},\dots,\beta_{N_V,N_H}]^T \in \mathbb{R}^{N_A}$ and $N_A = N_V \times N_H$. Letting $\bm{rsrp} = [rsrp^{(1)},\dots,rsrp^{(M)}]^T \in \mathbb{R}^M$, the expected multi-beam RSRP $\bm{y}$ and APS $\bm{x}$ satisfy~\cite{LSCM} the relationship
\begin{equation}
    \begin{aligned}\label{eq:y=Ax}
	{\bf y} = \mathbb{E}[\bm{rsrp}] = \bm{A} {\bm x},
\end{aligned}
\end{equation}
where $\bm{A} \in \mathbb{R}^{M \times N_A}$ is the measurement matrix determined by the antenna and beam settings. In practice, the sample mean is used to approximate $\bm{y}$.

Given the sparse nature of wireless channels due to limited scattering~\cite{10521790}, classical LSCM seeks, at measured location grids $l \in \mathcal{L}_k$, to estimate the APS $\bm{x}_l$ by solving
\begin{equation}
\begin{aligned}
     \min_{\bm{x}_l}\quad &  \|\bm{A} \bm{x}_l - \bm{y}_l\|_2^2 \\
    \text{s.t.}\quad & \|\bm{x}_l\|_0 \leq P,\\
    &  \bm{x}_l \geq 0,\qquad l \in \mathcal{L}_k,
\end{aligned}
\end{equation}
where $\bm{y}_l$ is the measured RSRP and $P$ is the maximum number of channel paths.

A key advantage of LSCM is its ability to separate the site-specific channel statistics (APS $\bm{x}_l$) from the system configuration (antenna/beam settings, encoded in $\bm{A}$). Since the APS $\bm{x}_l$ is determined solely by the physical environment, it remains invariant under different BS configurations, provided the environment does not change. This property enables a powerful application that, once the APS $\bm{x}_l$ is estimated from RSRP measurements under one set of system parameters $\bm{A}$, one can predict the RSRP under a different set of parameters $\bm{A}^r$ by simply computing
\begin{equation}
    \bm{y}_l^r = \bm{A}^r \bm{x}_l.
\end{equation}
This capability is fundamental to network optimization since it allows for the simulation and evaluation of new antenna or beam configurations without requiring new drive tests or channel measurements, provided that a reliable APS estimate is available. 

In practice, RSRP can only be measured at a sparse subset of spatial grids $\mathcal{L}_k$ (e.g., via drive tests), leaving the APS unknown at most locations $\mathcal{L}_o$ (such as urban blocks or private areas), as illustrated in Fig.~\ref{fig:Task}. This motivates the problem of \emph{localized statistical channel prediction} (LSCP) to predict the APS $\hat{\bm{x}}_l$ at all unmeasured grids $l \in \mathcal{L}_o$, given
\begin{equation}
\begin{aligned}
       &\text{Inputs:}~\{\bm{y}_l\}_{l\in \mathcal{L}_k},\ \bm{A},\ \text{priors}\\
   & \text{Output:}~\{\hat{\bm{x}}_l\}_{l\in \mathcal{L}_o}.
\end{aligned}
\end{equation}
In other words, LSCP seeks a mapping
\begin{equation}\label{eq:taskover}
    F: (\{\bm{y}_l\}_{l\in \mathcal{L}_k},~\bm{A},~\text{priors}) \rightarrow \{\hat{\bm{x}}_l\}_{l\in \mathcal{L}_o}
\end{equation}
that infers the spatial distribution of APS at all unmeasured locations using only limited RSRP and environmental knowledge.

 {\begin{remark}[Practical Scenario Justification]
The assumed scenario where dense point clouds paired with sparse, low-dimensional RSRP measurements is practically well motivated for the following reasons.
\begin{enumerate}
    \item  In network layer,  only RSRP is accessible via the Minimization of Drive Tests (MDT) or drive test (DT) protocol ~\cite{hapsari2012mdt}.  Each RSRP measurement yields a vector of dimension $M$ (the number of reference signal beams, typically $1\sim 8$), whereas the APS lives in $\mathbb{R}^{N_A}$ with thousands potential angular bins.  Because $M \ll N_A$, recovering the high-dimensional APS from the low-dimensional RSRP is an inherently ill-posed inverse problem that must exploit the sparsity of the wireless channel.
    \item  RSRP is an in-situ measurement which means  the receiver can only report signal strength at its exact physical location.  Drive-test vehicles are confined to road networks, therefore, RSRP samples are available only at scattered points along accessible routes, leaving the vast majority of the service area unmeasured.
   \item In contrast, LiDAR is a remote sensing modality that captures 3D geometry over a large volume per scan (radius up to 150\,m per pass~\cite{Sun2019ScalabilityIP}).  A single data-collection vehicle therefore produces dense environmental geometry over a wide area while simultaneously collecting only trajectory-constrained radio measurements.  Even when dedicated LiDAR is unavailable, image-based reconstruction methods such as DUSt3R~\cite{wang2024dust3r} can produce point clouds of sufficient accuracy (0.1--0.5\,m) for our framework.
\end{enumerate}
This asymmetry between dense geometry and sparse, low-dimensional radio data is therefore natural and consistent with modern network optimization workflows.
\end{remark}}

\subsection{3D Gaussian Splatting (3DGS)\label{sec:3dgsback}} 3DGS \cite{3dgs} is an explicit radiance field-based scene representation in computer vision that models a radiance field using a large number of total $I$ 3D anisotropic balls, { the $i$-th 3D Gaussian is defined as:
\begin{align}
    G_i(\bm{z}) = e^{-\frac{1}{2}(\bm{z} - \bm{\mu}_i)^\textnormal{T} {\bf \Sigma}^{-1}_i (\bm{z} - \bm{\mu}_i)},
\end{align}}where $\bm{z} = (z_0, z_1, z_2)$ denotes the three-dimensional spatial coordinates of the Gaussian point and $\bm{\mu}_i \in \mathbb{R}^3$ and ${\bf \Sigma}_i\in \mathbb{R}^{3\times3}$ denote the mean and variance of the Gaussian representation.
Besides, the $i$-th anisotropic ball is also characterized by the Spherical Harmonics (SH) coefficients $\bm{\tau}_i \in \mathbb{R}^{3 \times (k+1)^2}$ (where $k$ is the degree of the SH function $\bm{sh}(\cdot, k)$) \cite{fridovich2022plenoxels} for modeling view-dependent color. { Specifically, the color of the $i$-th 3D Gaussian at the view direction $\bm v_i$ is
\begin{align}
    \bm{c}_i= \bm{sh}(\bm{\tau}_i,\bm v_{i}, k) \in \mathbb{R}^3.
\end{align}{During the rendering phase, the 3D Gaussians $\{G_i(\bm{z})\}_{i=1}^{I}$
are projected (rasterized) onto the image plane, forming 2D Gaussian splats, which can be expressed as  $\{\bm {w}_i(\kappa)=\text{Proj}(G_i(\bm{z}))\}_{i=1}^I$ at the $\kappa$-th pixel, as described in \cite{3dgs}. The 2D Gaussian splats are sorted from front to back tile-wisely, and the \textit{alpha}-blending  algorithm is applied to approximate the $\kappa$-th pixel RGB value $\bm{C}(\kappa) \in \mathbb{R}^3$ as 
\begin{align}
    \bm{C}(\kappa) = \sum_{i\in\mathbb I_\kappa} \bm{c}_i \alpha_i\cdot\bm {w}_i(\kappa)\prod_{j=1}^{i-1} (1 - \alpha_j\cdot\bm {w}_j(\kappa)),
    \label{eq:alpha}
\end{align}
where $\mathbb I_{\kappa}$ is the set of splats that contribute to the $\kappa$-th pixel, {   $\alpha_i \in \mathbb{R}$ is the opacity of the $i$-th anisotropic 3D Gaussian ball.} After computing the color for each pixel, the rendered image is compared to the ground-truth image to calculate the pixel-wise loss. This loss is then used to update the model's properties parameters $\bm{\Phi} = \{(\bm{\mu}_i, \bm{\Sigma}_i,\alpha_i, \bm {\tau}_i ), i \in [I]\}$, ensuring the rendered scene aligns closely with the target, where $[I]$ denotes the vector consisting of $[1, \cdots, I]$.

\section{Proposed Framework: Point-Cloud-Assisted Tangent Gaussian Splatting}\label{sec:propose}
\subsection{Framework Overview}

\begin{figure*}[t]
\centering
    \includegraphics[width=\textwidth]{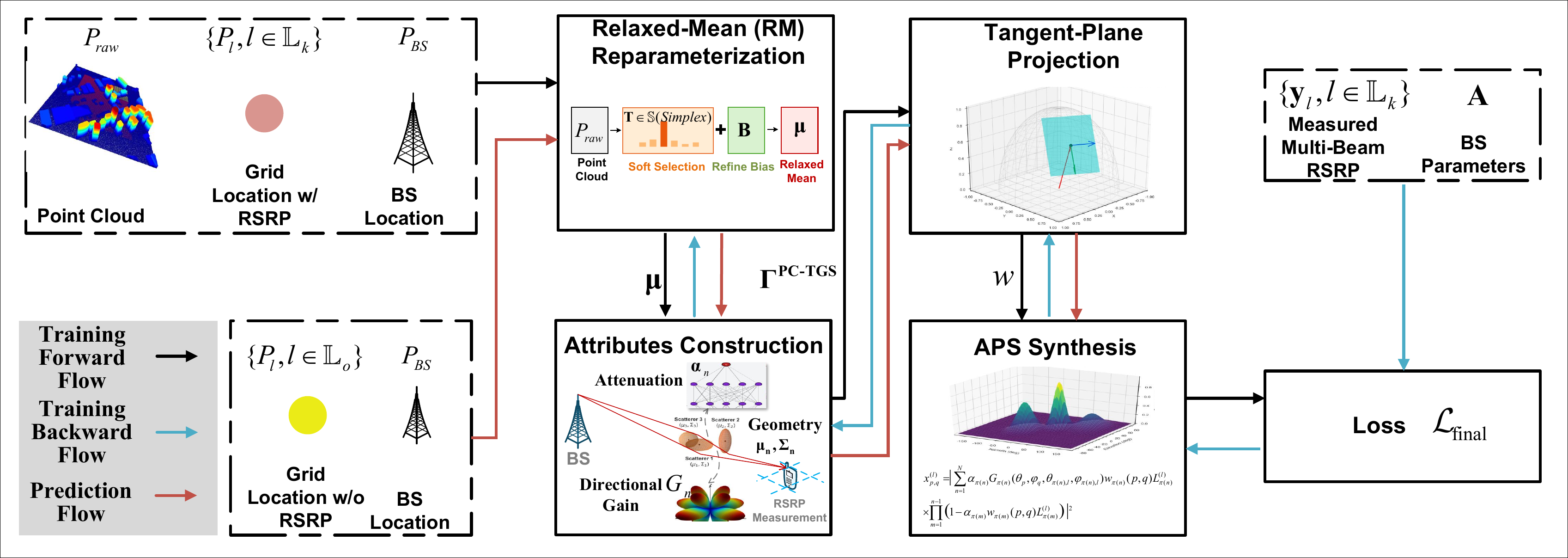}
    \caption{
    \textbf{Overview of the proposed PC-TGS (Point-Cloud-Assisted Tangent Gaussian Splatting) framework.}
    Given sparse, low-dimensional RSRP measurements, base station information, and dense 3D point clouds, PC-TGS extracts a set of mean positions of the virtual scatterers via relaxed-mean reparameterization, constructs physical and signal attributes, projects them onto the local tangent plane at each angular bin, and synthesizes the APS via differentiable electromagnetic splatting. This pipeline enables accurate, environment-aware channel prediction at unmeasured locations.
    }\label{fig:Framework}
\end{figure*}

Building upon the LSCP problem setting in Section~\ref{sec:system}, our goal is to enable accurate APS prediction at unmeasured locations by leveraging both sparse RSRP measurements and detailed environmental priors. To address the spatial extrapolation challenge, we develop a unified, physically-consistent PC-TGS framework.

PC-TGS extends 3DGS to the wireless domain by designing virtual scatterers using 3D Gaussians to represent the influence of the environment on the radio propagation. PC-TGS integrates (1) sparse RSRP measurement {  $\bm{y}_l\in \mathbb{R}^M$} with location {  $\bm{P}_l\in \mathbb{R}^3$} from drive tests or reports $\{\bm{y}_l,\bm{P}_l\}_{l\in \mathcal{L}_k}$, (2) known base station locations $\bm{P}_{BS}$ and BS parameters $\bm{A}$, and (3) dense 3D point cloud data $\bm{P}_{\mathrm{raw}} \in \mathbb{R}^{R \times 3}$ describing environmental geometry to provide a physical interpretable modeling to address LSCP.

As illustrated in Fig.~\ref{fig:Framework}, the PC-TGS pipeline consists of four core modules:
\begin{enumerate}
    \item \textbf{Relaxed-Mean Reparameterization:} A relaxed-mean reparameterization algorithm selects the position means of $N$ 3D Gaussians to represent the location of key virtual scatterers from the dense point cloud, reducing complexity while preserving key geometric features.
    \item \textbf{Attributes Construction:} Each virtual scatterer is assigned a set of learnable physical and electromagnetic attributes (e.g., location, shape, attenuation, and complex gain), modeling its interaction with the wireless channel and enabling APS synthesis.
    \item \textbf{Tangent-Plane Projection:} For each angular bin, all virtual scatterers are projected onto the local tangent plane that aligns environmental geometry with the angular domain used for APS computation.
    \item \textbf{APS Synthesis:} The projected scatterers' contributions are aggregated via a depth-aware, differentiable splatting rule, producing the predicted APS at each location grid in a way that is both physically grounded and amenable to end-to-end learning.
\end{enumerate}

To realize the LSCP objective defined in~(\ref{eq:taskover}), PC-TGS operates via two key mappings:
\begin{subequations}
\begin{align}
     &F^{\text{train}}: {(\{\bm{y}_l, \bm{P}_l\}_{l \in \mathcal{L}_k},\, \bm{A},\, \bm{P}_{\mathrm{raw}},\, \bm{P}_{BS})}
     \mapsto \bm{\Gamma}^{\text{PC-TGS}}, \\
   & F^{\text{predict}}_{\bm{\Gamma}^{\text{PC-TGS}}}:{(\bm{P}_{BS},\, \{\bm{P}_l\}_{l \in \mathcal{L}_o})}
   \mapsto \{\hat{\bm{x}}_l\}_{l \in \mathcal{L}_o},
\end{align}
\end{subequations}
where $\bm{\Gamma}^{\text{PC-TGS}}$ is the set of all learnable model parameters, and $\hat{\bm{x}}_l\in \mathbb{R}^{N_A\times 1}$ is the predicted APS at location $l$ (with $N_A$ angular bins). By tightly coupling sparse measurements with LiDAR-based geometry and a physics-inspired differentiable pipeline, PC-TGS enables accurate, scalable APS extrapolation to address the core challenge of LSCP.

 \subsection{Relaxed-Mean Reparameterization for Virtual Scatterer Extraction}\label{sec:RM}

A crucial step in the PC-TGS framework is the extraction of the position of 3D Gaussians that represent key virtual scatterers, such as building outlines or major reflectors, that dominate multipath propagation. Given a dense and noisy 3D point cloud $\bm{P}_{\mathrm{raw}}\in \mathbb{R}^{R\times 3}$, where $R$ is the number of raw points, the objective is to select $N$ representative scatterer locations $\{\bm{\mu}_n\}_{n=1}^N$ that are both physically plausible and maximally informative for APS modeling.

Ideally, this selection can be formulated as a constrained optimization problem:
\begin{subequations}
\begin{align}
&\text{Find}~\{\bm{\mu}_n\}_{n=1}^N \in \mathbb{R}^{N \times 3}, \\
&\text{s.t. }\quad  \bm{P}_N \subset \bm{P}_{\mathrm{raw}},  \\
&\qquad \|\{\bm{\mu}_n\}_{n=1}^N - \bm{P}_N\|_2 \leq \epsilon_1,  \\
&\qquad \|\bm{P}_N - \bm{P}_{\mathrm{BS}}\|_2 \leq \epsilon_2,
\end{align}
\end{subequations}
where $\bm{P}_N$ is a subset of the raw point cloud, the second constraint enforces proximity of the final scatterer positions to their selected anchors, aiming to mitigate noise, and the third  constraint prioritizes scatterers closer to the BS, as they contribute
to dominant signal paths.

However, this combinatorial selection is non-differentiable and thus not suitable for gradient-based learning. To address this, PC-TGS adopts a relaxed-mean (RM) reparameterization, in which each scatterer position is expressed as a soft, trainable aggregation of the raw point cloud:
\begin{equation}
    \{\bm{\mu}_n\}_{n=1}^N = \bm{T} \bm{P}_{\mathrm{raw}} + \bm{B},
\end{equation}
where $\bm{T} \in \mathbb{R}^{N \times R}$ is a trainable assignment matrix and $\bm{B} \in \mathbb{R}^{N \times 3}$ is a bias term for denoising and refinement. Each row $\bm{T}_{n,:}$ assigns non-negative weights to the $R$ points, subject to the simplex constraint,
\begin{equation}
\bm{T} \in \mathbb{S} = \left\{\bm{T} \in \mathbb{R}^{N \times R}:~ T_{n,r} \geq 0,~ \sum_{r=1}^R T_{n,r} = 1,~\forall n\right\}.
\end{equation}
This formulation allows $\bm{T}\bm{P}_{\mathrm{raw}}=\bm{P}_N$ to act as a soft selection of anchor points for each scatterer. To ensure that each scatterer is associated predominantly with a single region or point, a maximum element constraint (MEC) is imposed, requiring the largest entry in each row of $\bm{T}$ to strongly dominate all others:
\begin{equation}
    \max_{r} T_{n,r} \gg T_{n,r'},~ \forall n,~ r' \neq \arg\max_{r} T_{n,r}.
\end{equation} With this reparameterization, the original hard constraints are relaxed and encoded as regularization terms in the objective. Specifically, the proximity constraint $\|\bm{P}_N - \bm{P}_{\mathrm{BS}}\|_2 \leq \epsilon_2$ is promoted by the penalty $\lambda_1 \|\bm{T} \bm{P}_{\mathrm{raw}} - \bm{P}_{\mathrm{BS}}\|_2^2$, which favors scatterers near the BS. The denoising constraint $\|\{\bm{\mu}_n\}_{n=1}^N - \bm{P}_N\|_2 \leq \epsilon_1$ is enforced by the term $\lambda_2 \|\bm{B}\|_2^2$, ensuring the bias remains small and $\bm{\mu}_n$ stays close to its anchor. The MEC requirements are promoted by $\lambda_{\mathrm{MEC}}\sqrt{\sum_{n=1}^N\max_{r\in[R]}(T_{n,r}-1)^2}+\lambda_{\mathrm{sparsity}}\|\bm{T}\|_1$, which penalizes deviation from a one-hot vector in each row.

Then, the resulting optimization problem for the RM module is given by
\begin{align}
\min_{\bm{T}, \bm{B},\, \Phi^{\mathrm{PC-TGS}} \setminus \{\bm{\mu}_n\}}~ & \mathcal{L}_{\text{APS}} + \mathcal{L}_{\text{reg}}, \\
\text{s.t.}~ & \{\bm{\mu}_n\}_{n=1}^N = \bm{T} \bm{P}_{\mathrm{raw}} + \bm{B},~ \bm{T} \in \mathbb{S},
\end{align}
where the regularization term is
\begin{align}\label{eq:RMt}
\mathcal{L}_{\text{reg}} = &~ \lambda_1 \|\bm{T} \bm{P}_{\mathrm{raw}} - \bm{P}_{\mathrm{BS}}\|_2^2 + \lambda_2 \|\bm{B}\|_2^2 \nonumber \\
&~ + \lambda_{\mathrm{MEC}}\sqrt{\sum_{n=1}^N\max_{r\in[R]}(T_{n,r}-1)^2}+ \lambda_{\mathrm{sparsity}}\|\bm{T}\|_1,
\end{align}
with each term corresponding directly to the respective relaxed constraint above. $\mathcal{L}_{\text{APS}}$ will be specified in (\ref{eq:finalloss}).

This relaxed yet physically motivated approach to virtual scatterer selection provides a robust geometric foundation for subsequent APS attribute learning and tangent-plane projection, thus enabling PC-TGS to bridge raw sensing data and statistical channel prediction in complex environments.

\subsection{Attributes Construction}

Once the positions of the $N$ key virtual scatterers $\{\bm{\mu}_n\}_{n=1}^N$ have been extracted, PC-TGS models the physical interaction between each scatterer and the wireless propagation channel using a set of learnable signal attributes. These attributes are designed to capture both the geometric and electromagnetic effects that influence the local angular power spectrum (APS) at each location.

\textbf{PC-TGS Attributes:}  
Considering the channel reciprocity, each virtual scatterer $n$ is parameterized by
\begin{itemize}
    \item $\bm{\mu}_n \in \mathbb{R}^3$, which is the spatial position of the $n$-th scatterer, extracted via RM in subsection \ref{sec:RM}.
    \item $\bm{\Sigma}_n \in \mathbb{R}^{3 \times 3}$, which is the covariance matrix, encoding the anisotropic spatial uncertainty and effective spatial extent of the scatterer. This model shows that real-world scatterers are not point-like but have orientation and spread in space.
  \item $\alpha_{n} \in \mathbb{C}$, which is the intrinsic, distance-dependent complex attenuation coefficient of the $n$-th scatterer, modeling both amplitude attenuation and phase shift imparted by the scatterer. This parameter captures the inherent electromagnetic interaction strength of the scatterer with the radio wave, determined by its material, geometry, and other physical factors. We parameterize $\alpha_n$ as
\begin{align}\label{eq:atten}
\alpha_n = a_n \, e^{j \varphi_n}
\end{align}
where $a_n \geq 0$ is the magnitude attenuation and $\varphi_n \in [-\pi, \pi]$ is the phase shift induced by scatterer $n$. To explicitly model the geometric path loss effects, such as distances from the base station to scatterer, $d_{\mathrm{BS},n}$, both $a_n$ and $\varphi_n$ are treated as the outputs of one encoder-decoder designed MLP as
\begin{align}(a_n,\varphi_n)=\textbf{MLP}\left(\gamma(d_{\mathrm{BS},n});\bm{\Theta}_1\right),
\end{align}
with learnable parameters $\bm{\Theta}_1$ optimized jointly with other scatterer attributes. To enhance spatial resolution, we encode the distance information using a position encoding function inspired by \cite{tancik2020fourier} $\gamma(d_{\mathrm{BS},n})=[\sin(\pi d_{\mathrm{BS},n}),\cdots,\sin(\pi^Vd_{\mathrm{BS},n}),\cos(\pi^V d_{\mathrm{BS},n})]$, where $V$ is the order of the encoding.
  {Physically, $\alpha_n$ serves as an effective composite attenuation factor that holistically encodes overall electromagnetic interaction mechanisms for the $n$-th Gaussian scatterer. Since $\alpha_n$ is geometrically anchored to the Gaussian primitive whose mean $\bm{\mu}_n$ is initialized from the point cloud, the learned attenuation is spatially localized to specific physical structures, enabling generalization to nearby unmeasured locations.}
\item To rigorously account for the directionality dependence, we define $G_n(\theta_{\mathrm{BS},n}, \phi_{\mathrm{BS},n}, \theta_{n,l}, \phi_{n,l})$ as the complex gain function to describe the complex gain caused by the $n$-th virtual scatterers from the BS to scatterer $n$ in departure direction $(\theta_{\mathrm{BS},n}, \phi_{\mathrm{BS},n})$ and from scatterer $n$ to grid $l$ in arrival direction $(\theta_{n,l}, \phi_{n,l})$.
According to the complex spherical harmonic (SH) expansion of a radio wave \cite{1284991}, each gain can be expanded as\footnote{The order of SH expansion $S$ is a hyperparameter and we use $S=4$ in this paper.}
\begin{align}\label{eq:G}
G_n&(\theta_{\mathrm{BS},n}, \varphi_{\mathrm{BS},n}, \theta_{n,l}, \varphi_{n,l})=\nonumber\\
\sum_{s=0}^{S} \sum_{t=-s}^{s} &\tau_{n,st}\, Y_{s t}(\theta_{\mathrm{BS},n}, \varphi_{\mathrm{BS},n})\, Y_{s t}(\theta_{n,l}, \varphi_{n,l}),
\end{align}
where $\tau_{n,st}$ are the intrinsic learnable complex SH coefficients and
\begin{align*}
    Y_{s t}(\theta, \varphi) &= \sqrt{ \frac{2s+1}{4\pi} \frac{(s-t)!}{(s+t)!} } P_s^{|t|}(\cos\theta) e^{it\varphi},
\end{align*}
and $P_s^{|t|}(\cdot)$ is the associated Legendre function \cite{colton1998inverse}.
\end{itemize}

Collectively, the set of attributes for all scatterers is denoted as
\[
\bm{\Gamma}^{PC-TGS} = \{ \bm{\mu}_n, \bm{\Sigma}_n, \alpha_{n}, G_n\}_{n=1}^N.
\]

\subsection{Tangent-Plane Projection}

To render the APS for the AoD from the BS, we must map the 3D Gaussian scatterers to the 2D angular domain as viewed from $\bm{P}_{\mathrm{BS}}$. We present a physically-grounded projection approach leveraging the tangent plane to the unit sphere at each angular bin $(\theta_p, \varphi_q)$, corresponding to the discretized angular bins defined in Section \ref{Subsection: LSCM}. The key idea of the tangent-plane projection is to treat the tangent plane at each angular bin as a virtual receiving plane, as shown in Fig.\ref{fig:projection}. Projecting all the 3D Gaussians onto each tangent plane and aggregating the projected results leads to the channel coefficients in each angular bin and produces the APS. The detailed procedures are as follows:
\begin{figure}
\begin{center}
\centerline{\includegraphics[width=1\columnwidth]{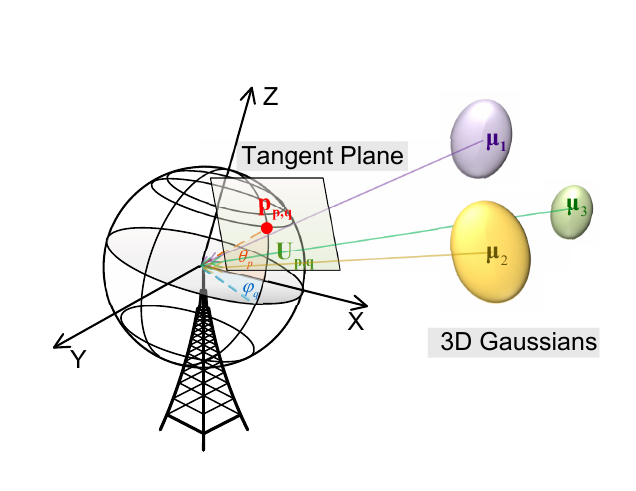}}
\caption{%
Illustration of the tangent-plane projection.
}
\label{fig:projection}
\end{center}
\end{figure} 
\begin{enumerate}
    \item \textbf{Step 1. Construct the tangent plane basis:}  
    With the BS at the origin, the tangent plane basis at angular bin $(\theta_p, \varphi_q)$ is given by
    \begin{equation}\label{eq:defU}
       \bm{U}_{p,q} = [\mathbf{u}_{p,q,1}~\mathbf{u}_{p,q,2}],
    \end{equation}
    where
    \begin{align*}
       \mathbf{u}_{p,q,1} &=
        \begin{bmatrix}
        \cos\theta_p\cos\varphi_q \\
        \cos\theta_p\sin\varphi_q \\
        -\sin\theta_p
        \end{bmatrix}, \quad
        \mathbf{u}_{p,q,2}=
        \begin{bmatrix}
        -\sin\varphi_q \\
        \cos\varphi_q \\
        0
        \end{bmatrix}
    \end{align*}
    are the tangent plane basis vectors.

    \item \textbf{Step 2. Project each 3D Gaussian to the tangent plane:}  
    For each scatterer $n$, the projected mean and covariance on the tangent plane corresponding to $(\theta_p, \varphi_q)$ are
    \begin{equation}
        \begin{aligned}
            \bm{\mu}_{p,q,n} &= \bm{U}_{p,q}^\top (\bm{\mu}_n - \mathbf{p}_{p,q}) \\
            \bm{\Sigma}_{p,q,n} &= \bm{U}_{p,q}^\top \bm{\Sigma}_n \bm{U}_{p,q},
        \end{aligned}
    \end{equation}
    where $\mathbf{p}_{p,q} = \bm{P}_{BS} + \mathbf{n}_{p,q}$ and
    \begin{equation}
        \mathbf{n}_{p,q} =
        \begin{bmatrix}
        \sin\theta_p\cos\varphi_q \\
        \sin\theta_p\sin\varphi_q \\
        \cos\theta_p
        \end{bmatrix}.
    \end{equation}
    
    After projection, each 3D Gaussian becomes a 2D Gaussian on the local tangent plane at $(\theta_p, \varphi_q)$ with density
    \begin{equation}\label{eq:2fgspdf}
        f_{p,q,n}(\mathbf{y}_{p,q}) = \mathcal{N}\left(\mathbf{y}_{p,q};\, \bm{\mu}_{p,q,n},\, \bm{\Sigma}_{p,q,n}\right),
    \end{equation}
    where $\mathbf{y}_{p,q}$ denotes 2D coordinates in the tangent plane.
\end{enumerate}

This 2D Gaussian in the tangent plane forms the basis for rigorous APS synthesis as illustrated in the following subsection.

\subsection{APS Synthesis via Integration-aided Rendering}

To compute the APS for each angular bin, we integrate each projected 2D Gaussian density over the corresponding tangent plane, and aggregate all integration results to obtain the final APS.

\subsubsection{Spherical-Patch Integration}

Let $f_{p,q,n}(\mathbf{y}_{p,q})$ denote the $n$-th projected 2D Gaussian PDF on the tangent plane at $(\theta_p, \varphi_q)$. The corresponding spherical patch (angular bin) is
\begin{equation}
\begin{aligned}\label{eq:regionR}
\mathcal{R} = \Big\{\, (\theta', \varphi') :\;\;
& \theta' \in [\theta_p - \Delta\theta,\, \theta_p + \Delta\theta], \\
& \varphi' \in [\varphi_q - \Delta\varphi,\, \varphi_q + \Delta\varphi]\, \Big\}
\end{aligned}
\end{equation}
Each angle in $\mathcal{R}$ is mapped onto the tangent plane by
\begin{equation}\label{eq:mapg}
\mathbf{g}(\theta', \varphi') = \bm{U}_{p,q}^\top \left( \frac{1}{\mathbf{n}'^\top \mathbf{n}_{p,q}} \mathbf{n}' - \mathbf{n}_{p,q} \right),
\end{equation}
where $\mathbf{n}' = [
\sin\theta' \cos\varphi',\;
\sin\theta' \sin\varphi',\;
\cos\theta'
]^\top$ is the 3D unit direction.

The integration of the $n$-th projected Gaussian on the tangent plane at $(\theta_p, \varphi_q)$ over $\mathcal{R}$ can be expressed as
\begin{equation}\label{eq:exactIntpq}
\bm{w}^{\text{exact}}_{n}(p,q) =
\iint_{\mathcal{R}}
f_{p,q,n}\big( \mathbf{g}(\theta', \varphi') \big)
\left| \det J_{\mathbf{g}}(\theta', \varphi') \right| \;
d\theta' d\varphi',
\end{equation}
where $J_{\mathbf{g}}$ is the $2 \times 2$ Jacobian of $\mathbf{g}(\theta', \varphi')$.

\subsubsection{APS Synthesis}\label{sec:Synthesis}

We now explain how the previously projected 2-D Gaussian footprints are combined into an angular-power-spectrum (APS) estimate at a target grid
$l$.  For each scatterer $n\!\in\![N]$ we already know  
 its complex weight on beam $(p,q)$,
$\bm{w}^{\text{exact}}_{n}(p,q)$ in \eqref{eq:exactIntpq},
its intrinsic attenuation magnitude--phase pair
$\alpha_n$ in \eqref{eq:atten}, and   its direction-dependent gain
$G_n(\theta_{\mathrm{BS},n},\varphi_{\mathrm{BS},n},\theta_{n,l},\varphi_{n,l})$ obtained from \eqref{eq:G}.
A further distance-based factor \footnote{The path-loss factors $\gamma_1$ and $\gamma_2$ are set to be $2$ in this paper.}
$
  L^{(l)}_n
  =
  d_{\mathrm{BS},n}^{-\gamma_1}\,
  d_{n,l}^{-\gamma_2}
$
captures free-space loss along the two-hop path
from the BS to the scatterer and then to the $l$-th grid.
{Because nearer scatterers can partially mask farther ones, we
must decide a front-to-back compositing order before the final synthesis.
Following the radio propagation path, all scatterers are therefore sorted in ascending
depth---i.e., by the distance $d_{n,l}$ of their means
$\bm \mu_{n}$ towards
$\bm{P}_l$.  The resulting permutation
$\pi:[N]\rightarrow[N]$ defines the depth-ordered list  
$\mathbb{I}_{p,q}=\{\pi(1),\pi(2),\dots,\pi(N)\}$ for every beam.}

Given this order, the power received at the $l$-th grid from bin $(\theta_{p},\varphi_{q})$ is

\begin{equation}
\label{eq:complex_alpha_blending}
\begin{aligned}
x^{(l)}_{p,q}=&
\Bigl|
  \sum_{n=1}^{N}
    \alpha_{\pi(n)}
    G_{\pi(n)}(\theta_p,\varphi_q,\theta_{\pi(n),l},\varphi_{\pi(n),l})
    \bm{w}^{\text{exact}}_{\pi(n)}(p,q)
    L^{(l)}_{\pi(n)}\\
    &\times\prod_{m=1}^{n-1}
      \bigl(
        1-
        \alpha_{\pi(m)}
        \bm{w}^{\text{exact}}_{\pi(m)}(p,q)
        L^{(l)}_{\pi(m)}
      \bigr)
\Bigr|^{2}.
\end{aligned}
\end{equation}

Repeating \eqref{eq:complex_alpha_blending} over the full
$N_V\times N_H$ beam grid and concatenating the results yields the APS
vector
$
  \hat{\bm{x}}_l
  =
  [x^{(l)}_{1,1},\dots,x^{(l)}_{N_V,N_H}]^\top.
$

\subsection{Learning Objective and Recursive Fine-Tuning}

The model jointly learns all parameters involved in the APS synthesis and RSRP mapping. The full set of trainable parameters is denoted as
\[
\bm{\Gamma}^{\text{PC-TGS}} = \left\{
    \{\bm{\Sigma}_n,\, \{\tau_{n,st}\}_{0 \leq s \leq S,\; -s \leq t \leq s} \}_{n=1}^N,\;
    \bm{B},\;
    \bm{T},\; \bm{\Theta}_1
\right\},
\]where $ \bm{B}$ and $\bm{T}$ reparameterize  $\{\bm{\mu}_n\}_{n=1}^N$, $\{\tau_{n,st}\}_{0 \leq s \leq S,\; -s \leq t \leq s} \}_{n=1}^N$ reparameterize $\{G_n\}_{n=1}^N$, and $\bm{\Theta}_1$ reparameterizes $\{\alpha_{n}\}_{n=1}^N$.
These parameters are optimized by minimizing a composite loss over the measured grid locations $l \in \mathcal{L}_k$:
\begin{equation}
    \mathcal{L}_{\text{final}} = \underbrace{\sum_{l \in \mathcal{L}_k} \|\bm{y}_l - \bm{A} \hat{\bm{x}}_l\|_2^2
    + \lambda_{\text{TV}}\sum_{l} \|\hat{\bm{x}}_l - \hat{\bm{x}}_{KNN(l)}\|_1}_{ \mathcal{L}_{\text{APS}}}
    + \mathcal{L}_{\text{reg}},
    \label{eq:finalloss}
\end{equation}
where $\bm{y}_l$ is the measured RSRP vector at grid $l \in \mathcal{L}_k$, $\hat{\bm{x}}_l = [x_{1,1}^{(l)}, \ldots, x_{N_V,N_H}^{(l)}]$ is the predicted APS vector for $l$ (see Section~\ref{sec:Synthesis}), $\bm{A}$ is the linear mapping from APS to RSRP, $KNN(l)$ denotes the $K$-nearest neighbors of $l$ in the grid,  $\lambda_{\text{TV}}$ is a hyperparameter controlling the total variation (TV) loss for spatial smoothness, and $\mathcal{L}_{\text{reg}}$ includes regularizers on the model parameters specified in Section ~\ref{sec:RM}. 
This loss is minimized via mini-batch gradient descent, balancing accuracy, spatial coherence, and model regularity. 

To further enhance generalization and extrapolation, we employ a {\em recursive fine-tuning scheme}, inspired by boundary-propagation strategies:
\begin{enumerate}
    \item \textbf{Initial Training:} Optimize the model on the initial training set $\mathcal{L}_k$, where RSRP is measured, using the loss in~\eqref{eq:finalloss}.
    \item \textbf{Boundary Prediction:} Select a set of unmeasured grids near the current training region boundary, denoted $\mathcal{L}_D \subseteq \mathcal{L}_o$ (within $D=4.5$\,grids of distance).
    \item \textbf{Expansion:} Predict APS and RSRP at these boundary grids $\mathcal{L}_D$ using the current model, and augment the training set: $\mathcal{L}_k \leftarrow \mathcal{L}_k \cup \mathcal{L}_D$, $\mathcal{L}_o \leftarrow \mathcal{L}_o \setminus \mathcal{L}_D$.
    \item \textbf{Recursive Refinement:} Retrain or fine-tune the model on the expanded set $\mathcal{L}_k$, repeating steps 2--3 until all unmeasured grids are covered.
\end{enumerate}
This recursive procedure allows the model to iteratively propagate learned multipath and environmental structure from measured regions into previously unseen areas. During testing, hardening of $\bm{T}$ is applied, and the final trained model is used to predict APS  at all unmeasured target locations.
 {\begin{remark}[Justification of Adapting  3DGS to Radio Propagation]
The adaptation of 3DGS to the radio domain rests on two physical
correspondences. First, each 3D Gaussian ellipsoid
$(\bm{\mu}_n, \bm{\Sigma}_n)$ serves as a \emph{virtual scatterer
cluster}: its mean encodes the scatterer location, its anisotropic
covariance captures the effective spatial extent and orientation of the
reflecting or diffracting surface which is consistent with the cluster concept
in geometry-based stochastic channel models, and its
spherical-harmonic gain $G_n$ endows the cluster with an
angle-dependent complex scattering pattern.
Second, the received radio signal is the coherent sum of attenuated multipath components (MPCs).
The depth-ordered accumulation in~\eqref{eq:complex_alpha_blending}
directly realizes this principle in a differentiable pipeline. Each
Gaussian contributes one MPC whose amplitude is governed by the learned
attenuation $\alpha_n$, while the front-to-back ordering naturally
accounts for progressive shadowing and occlusion which can be regarded as the radio
counterpart of volumetric opacity in visual rendering.
\end{remark}}
 {\subsection{GWA-based Integration Approximation}
\label{sec:GWA-method-theory}}

 {The bin integral in \eqref{eq:exactIntpq} generally lacks a closed form due to the nonlinear spherical-to-tangent mapping, complicating deployment. This subsection derives a tractable, closed-form approximation of each bin contribution and quantifies its error by
(i) locally linearizing the spherical-to-tangent map over the bin; and
(ii) replacing the hard-edged parallelogram footprint by a moment-matched Gaussian window (GWA), yielding a closed form.}

 {\subsubsection{Local Linearization}}
For sufficiently small angular bins, we linearize $\mathbf{g}(\theta', \varphi') $ in (\ref{eq:mapg}) at the bin center by
\begin{equation}
\mathbf{g}(\theta', \varphi') \approx \bm{J}_{\mathbf{g}}(\theta_p, \varphi_q)
\begin{bmatrix}
\theta' - \theta_p \\
\varphi' - \varphi_q
\end{bmatrix},
\end{equation}
where $\bm{J}_{\mathbf{g}}(\theta_p, \varphi_q)$ is the Jacobian at the center. Under this linearization, the integration region in the tangent plane becomes a parallelogram, and (\ref{eq:exactIntpq}) can be converted to 
\begin{equation}\label{eq:linParallelogram}
\begin{aligned}
\bm{w}^{\text{lin}}_{n}(p,q) &=
\iint_{\hat{\mathbf{y}}_{p,q} \in \mathcal{R}_{\hat{\mathbf{y}}_{p,q}}}
\mathcal{N}(\hat{\mathbf{y}}_{p,q} ;\, \bm{\mu}_{p,q,n},\, \bm{\Sigma}_{p,q,n})\, d\hat{\mathbf{y}}_{p,q}\\
=&\iint_{\mathbb{R}^2}
\mathcal{N}(\hat{\mathbf{y}}_{p,q} ;\, \bm{\mu}_{p,q,n},\, \bm{\Sigma}_{p,q,n})\, \mathbb{I}_{\mathcal{R}_{\hat{\mathbf{y}}_{p,q}}}( \hat{\mathbf{y}}_{p,q})d\hat{\mathbf{y}}_{p,q}
\end{aligned}
\end{equation}
where $\hat{\mathbf{y}}_{p,q}$ is defined as the linearized tangent-plane coordinate,
\begin{equation}
\hat{\mathbf{y}}_{p,q} = \bm{J}_{\mathbf{g}}(\theta_p, \varphi_q) 
\begin{bmatrix}
\theta' - \theta_p \\ \varphi' - \varphi_q
\end{bmatrix},
\end{equation}
and $\mathcal{R}_{\hat{\mathbf{y}}_{p,q}}$ is the image of the angular bin $\mathcal{R}$ under this linearized mapping, i.e.,
\begin{equation}
\begin{aligned}\label{eq:regionRy}
\mathcal{R}_{\hat{\mathbf{y}}_{p,q}} = \Big\{\, 
  \hat{\mathbf{y}}_{p,q} = \bm{J}_{\mathbf{g}}(\theta_p, \varphi_q)
  \begin{bmatrix}
    \delta\theta \\ \delta\varphi
  \end{bmatrix}\ :\ 
  &\ \delta\theta \in [-\Delta\theta,\, \Delta\theta], \\
  &\ \delta\varphi \in [-\Delta\varphi,\, \Delta\varphi]
\, \Big\}
\end{aligned}
\end{equation}
which is a parallelogram in the tangent plane. However, due to the \emph{hard indicator} $\mathbb{I}_{\mathcal{R}_{\hat{\mathbf{y}}_{p,q}}}( \hat{\mathbf{y}}_{p,q})$, manipulation on  $\bm{w}^{\text{lin}}_{n}(p,q)$ is still intractable. Therefore, we propose the following GWA-based approximation to replace the hard indicator.

 {\subsubsection{GWA-based Approximation}}

The hard indicator in~(\ref{eq:linParallelogram}) can be viewed as an unnormalized uniform density over the angular bin as it assigns equal weight to all points within the bin and zero elsewhere, physically aggregating scatterer energy over each discrete direction. To enable closed-form analysis, we approximate this uniform distribution with a Gaussian window that matches both its area and second central moments. This moment-matched GWA  preserves the total energy and spatial spread of the bin, providing a computationally efficient and physically faithful surrogate for the original integration.

Concretely, we replace $\mathbb{I}_{\mathcal{R}_{\hat{\mathbf{y}}_{p,q}}}( \hat{\mathbf{y}}_{p,q})$ with a Gaussian window,
\begin{align}
W(\hat{\mathbf{y}}_{p,q}) = A_{\hat{\mathbf{y}}_{p,q}}\,\mathcal{N}\big(\hat{\mathbf{y}}_{p,q};\,\mathbf{0},\,\mathbf{S}_{\hat{\mathbf{y}}_{p,q}}\big),
\end{align}
where the normalization and covariance are given by
\begin{align}
A_{\hat{\mathbf{y}}_{p,q}} &= \mathrm{Area}({\mathcal{R}_{\hat{\mathbf{y}}_{p,q}}}) = 4\,\Delta\theta\,\Delta\varphi\,\bigl|\det \bm{J}_{\mathbf{g}}(\theta_p, \varphi_q)\bigr|, \\
\mathbf{S}_{\hat{\mathbf{y}}_{p,q}} &= \bm{J}_{\mathbf{g}}(\theta_p, \varphi_q)\,\mathbf{S}_{\mathrm{ang}}\, \bm{J}_{\mathbf{g}}(\theta_p, \varphi_q)^\top, \\
\mathbf{S}_{\mathrm{ang}} &= \begin{bmatrix}\Delta\theta^2/3 & 0\\ 0 & \Delta\varphi^2/3\end{bmatrix}.
\end{align}
Substituting $W(\hat{\mathbf{y}}_{p,q})$ into~(\ref{eq:linParallelogram}) yields
\begin{equation}
\begin{aligned}
&\bm{\widetilde{w}}_n(p,q)
= \int_{\mathbb{R}^2} f_{p,q,n}(\hat{\mathbf{y}}_{p,q})\,W(\hat{\mathbf{y}}_{p,q})\,\mathrm{d}\hat{\mathbf{y}}_{p,q} \\
&= A_{\hat{\mathbf{y}}_{p,q}} \int_{\mathbb{R}^2} \mathcal{N}(\hat{\mathbf{y}}_{p,q};\boldsymbol{\mu}_{p,q,n},\boldsymbol{\Sigma}_{p,q,n}) \,
\mathcal{N}(\hat{\mathbf{y}}_{p,q};\mathbf{0},\mathbf{S}_{\hat{\mathbf{y}}_{p,q}})\,\mathrm{d}\hat{\mathbf{y}}_{p,q},
\end{aligned}
\end{equation}
which, by Gaussian convolution, simplifies to the closed-form
\begin{equation}\label{eq:GWAApproxFinal}
\bm{\widetilde{w}}_n(p,q)
= A_{\hat{\mathbf{y}}_{p,q}}\, \mathcal{N}\big(\mathbf{0};\, \boldsymbol{\mu}_{p,q,n},\, \boldsymbol{\Sigma}_{p,q,n}+\mathbf{S}_{\hat{\mathbf{y}}_{p,q}}\big).
\end{equation}
Replacing $ \bm{w}^{\text{exact}}_n(p,q)$ in~(\ref{eq:complex_alpha_blending}) by $\bm{\widetilde{w}}_n(p,q)$  yields an efficient and accurate APS synthesis.

 {\subsubsection{Error Analysis}}
\label{sec:error-analysis}

We analyze the accuracy of the GWA-based approximation in \eqref{eq:GWAApproxFinal} relative to the exact bin integral \eqref{eq:exactIntpq}. Before stating the main error bound, we establish two key regularity results. 

\begin{lemma}[Regularity of the spherical-to-tangent mapping]\label{lem:g-regularity-inline}
    Let $\boldsymbol{\delta} \triangleq (\theta'-\theta_p,\,\varphi'-\varphi_q)^\top$
    denote the angular displacement from the bin center $(\theta_p,\varphi_q)$.
    There exists a neighborhood $\mathcal{R}$ of $(\theta_p,\varphi_q)$
    and finite positive constants $C_J$ and $C_g$, depending only on
    $\mathcal{R}$, such that $\mathbf{g}(\theta',\varphi')$ from
    \eqref{eq:mapg} is infinitely differentiable ($C^\infty$) in
    $\mathcal{R}$, and its Jacobian and Hessian are uniformly bounded:
\begin{equation}
    \sup_{(\theta',\varphi')\in\mathcal{R}}\|\bm{J}_{\mathbf{g}}(\theta',\varphi')\|\le C_J, \qquad
    \sup_{(\theta',\varphi')\in\mathcal{R}}\|D^2\mathbf{g}(\theta',\varphi')\|\le C_g.
\end{equation}
Moreover, for all $(\theta',\varphi') \in \mathcal{R}$, the Taylor expansion up to second order holds:
\[
\mathbf{g}(\theta',\varphi')
=\mathbf{g}(\theta_p,\varphi_q)+\bm{J}_{\mathbf{g}}(\theta_p,\varphi_q)\boldsymbol{\delta}
+\tfrac{1}{2}\mathbf{Q}(\boldsymbol{\delta})
+\mathbf{R}_{\mathrm g}(\boldsymbol{\delta}),
\]
with $\|\mathbf{Q}(\boldsymbol{\delta})\|\le C_g\|\boldsymbol{\delta}\|^2$ and $\|\mathbf{R}_{\mathrm g}(\boldsymbol{\delta})\|\le C_g\|\boldsymbol{\delta}\|^3$.
\end{lemma}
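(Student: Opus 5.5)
The plan is to write $\mathbf{g}$ as a composition of elementary smooth maps and to control its denominator on a small enough neighborhood. Everything else then follows from compactness and Taylor's theorem.

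Set $\mathbf{n}(\theta',\varphi') = [\sin\theta'\cos\varphi',\,\sin\theta'\sin\varphi',\,\cos\theta']^\top$. Each component is a product of sines and cosines, so $\mathbf{n}$ is $C^\infty$ (indeed real-analytic) on all of $\mathbb{R}^2$. The scalar $c(\theta',\varphi') \triangleq \mathbf{n}'^\top\mathbf{n}_{p,q}$ is therefore $C^\infty$ as well. It satisfies $c(\theta_p,\varphi_q)=\|\mathbf{n}_{p,q}\|^2=1$.

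By continuity, there is a closed rectangle $\mathcal{R}=[\theta_p-\Delta\theta,\theta_p+\Delta\theta]\times[\varphi_q-\Delta\varphi,\varphi_q+\Delta\varphi]$ on which $c\ge c_0>0$; for instance, take $\Delta\theta,\Delta\varphi$ small enough that $c\ge 1/2$. This is the only place a restriction on $\mathcal{R}$ enters. Geometrically, it says the bin stays strictly inside the hemisphere facing the tangent point, so that the central (gnomonic) projection $\mathbf{n}'\mapsto \mathbf{n}'/c-\mathbf{n}_{p,q}$ is well defined.

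On $\mathcal{R}$, $1/c$ is $C^\infty$ as a composition of $c$ with $t\mapsto 1/t$ on $[c_0,\infty)$. Hence $\mathbf{g}=\bm{U}_{p,q}^\top(\mathbf{n}/c-\mathbf{n}_{p,q})$ is $C^\infty$, being a constant linear map applied to products and sums of $C^\infty$ functions.

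For the bounds, I would use compactness. Since $\mathcal{R}$ is compact and $\bm{J}_{\mathbf{g}}$, $D^2\mathbf{g}$, $D^3\mathbf{g}$ are continuous, their norms attain finite maxima on $\mathcal{R}$. I define $C_J=\max\|\bm{J}_{\mathbf{g}}\|$ and $C_g=\max\{\max\|D^2\mathbf{g}\|,\ \tfrac16\max\|D^3\mathbf{g}\|\}$. These constants depend only on $\mathcal{R}$, through $c_0$ and the trigonometric derivatives, which are bounded by $1$. If explicit constants are wanted, the quotient rule gives $\|D^k(\mathbf{n}/c)\|\lesssim \sum_j c_0^{-(j+1)}$, using $\|\bm{U}_{p,q}\|=1$ (its columns are orthonormal). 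In the answer I only need finiteness.

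For the expansion, $\mathcal{R}$ is convex, so for each $(\theta',\varphi')\in\mathcal{R}$ the segment from the center lies in $\mathcal{R}$. Taylor's theorem with integral (or Lagrange) remainder along this segment gives
\[
\mathbf{g}(\theta',\varphi') = \mathbf{g}(\theta_p,\varphi_q)+\bm{J}_{\mathbf{g}}(\theta_p,\varphi_q)\boldsymbol{\delta}+\tfrac12 D^2\mathbf{g}(\theta_p,\varphi_q)[\boldsymbol{\delta},\boldsymbol{\delta}]+\mathbf{R}_{\mathrm g}(\boldsymbol{\delta}).
\]
Here $\|\mathbf{R}_{\mathrm g}\|\le \tfrac16\sup\|D^3\mathbf{g}\|\,\|\boldsymbol{\delta}\|^3\le C_g\|\boldsymbol{\delta}\|^3$. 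I set $\mathbf{Q}(\boldsymbol{\delta})=D^2\mathbf{g}(\theta_p,\varphi_q)[\boldsymbol{\delta},\boldsymbol{\delta}]$, so that $\|\mathbf{Q}\|\le C_g\|\boldsymbol{\delta}\|^2$. I would also note that $\mathbf{g}(\theta_p,\varphi_q)=\bm{U}_{p,q}^\top(\mathbf{n}_{p,q}-\mathbf{n}_{p,q})=\mathbf{0}$, which is consistent with the linearization used earlier.

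The main obstacle is the denominator $c$. Uniform bounds fail as the bin approaches $90^\circ$ from the tangent direction, so the statement is genuinely local. Making the choice of $c_0$, and hence of the neighborhood, explicit is the key step; everything after it is standard compactness plus Taylor's theorem. A minor bookkeeping point is that one constant $C_g$ must cover both the second- and third-order terms, which is handled by taking the maximum as above.
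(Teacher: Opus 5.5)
Your proposal is correct and follows essentially the same route as the paper: bound $\mathbf{n}'^\top\mathbf{n}_{p,q}$ below by some $c_0>0$ on a small enough bin, deduce $C^\infty$ smoothness by composition, and finish with Taylor's theorem on a compact neighborhood. The paper obtains $c_0=\cos\rho$ from a geodesic radius $\rho<\pi/2$ and gives an explicit quotient-rule bound $C_J=2(1/c_0+1/c_0^2)$ where you use continuity and compactness. Your choice $C_g=\max\{\max_{\mathcal{R}}\|D^2\mathbf{g}\|,\tfrac16\max_{\mathcal{R}}\|D^3\mathbf{g}\|\}$ also makes explicit the third-derivative control that the paper relies on tacitly when it bounds the cubic remainder by the same $C_g$.
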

\begin{proof}
  Proof in Appendix~\ref{app:lemmas1}.
\end{proof}

\begin{lemma}[Regularity of the integration region]\label{lem:local-invertibility-inline}
   Suppose the Jacobian $\bm{J}_{\mathbf{g}}(\theta_p,\varphi_q)$ at $(\theta_p, \varphi_q)$ is invertible. Then there exists $r > 0$ such that, for any rectangle $\mathcal{R}$ as in (\ref{eq:regionR}) with $\max\{\Delta\theta,\Delta\varphi\}<r$, the mapping $\mathbf{g}(\theta',\varphi')$ is a $C^1$ diffeomorphism from $\mathcal{R}$ onto its image $\mathbf{g}(\mathcal{R})$, and $|\det\bm{J}_{\mathbf{g}}(\theta, \varphi)|$ is bounded below by a positive constant for all $(\theta, \varphi) \in \mathcal{R}$. The linear image $\mathcal{R}_{\hat{\mathbf{y}}_{p,q}}$ in (\ref{eq:regionRy}) is a nondegenerate parallelogram in the tangent plane.
\end{lemma}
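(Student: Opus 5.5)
The argument is the inverse function theorem applied to $\mathbf{g}$, plus uniform continuity of the Jacobian on a compact rectangle. Lemma~\ref{lem:g-regularity-inline} gives that $\mathbf{g}$ is $C^\infty$ on a neighborhood of $(\theta_p,\varphi_q)$. This holds because $\mathbf{n}'^\top\mathbf{n}_{p,q}=1$ at the bin center, so by continuity the denominator in \eqref{eq:mapg} stays bounded away from zero nearby. Hence $\bm{J}_{\mathbf{g}}$ is continuous there. We also note that $\mathbf{g}(\theta_p,\varphi_q)=\bm{U}_{p,q}^\top(\mathbf{n}_{p,q}-\mathbf{n}_{p,q})=\mathbf{0}$.

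\textbf{Step 1 (uniform lower bound on the determinant).} Set $\sigma_0=\sigma_{\min}(\bm{J}_{\mathbf{g}}(\theta_p,\varphi_q))>0$, using the invertibility hypothesis. By continuity, choose $r_1>0$ such that the closed square $Q_{r_1}=[\theta_p-r_1,\theta_p+r_1]\times[\varphi_q-r_1,\varphi_q+r_1]$ lies inside the neighborhood of Lemma~\ref{lem:g-regularity-inline} and
\[
\|\bm{J}_{\mathbf{g}}(\theta,\varphi)-\bm{J}_{\mathbf{g}}(\theta_p,\varphi_q)\|\le \sigma_0/2 \quad \text{on } Q_{r_1}.
\]
Weyl's inequality for singular values then gives $\sigma_{\min}(\bm{J}_{\mathbf{g}})\ge\sigma_0/2$ on $Q_{r_1}$, so $|\det\bm{J}_{\mathbf{g}}|\ge\sigma_0^2/4>0$ there. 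Any rectangle $\mathcal{R}$ with $\max\{\Delta\theta,\Delta\varphi\}<r_1$ is contained in $Q_{r_1}$ and inherits this bound.

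\textbf{Step 2 (global injectivity on $\mathcal{R}$).} This is the main obstacle: a nonvanishing Jacobian gives only local invertibility, so we need an injectivity argument on the whole rectangle. Since $\mathcal{R}$ is convex, for any $\mathbf{a},\mathbf{b}\in\mathcal{R}$ the mean-value integral form gives
\[
\mathbf{g}(\mathbf{b})-\mathbf{g}(\mathbf{a})=\bm{J}_0(\mathbf{b}-\mathbf{a})+\int_0^1\bigl(\bm{J}_{\mathbf{g}}(\mathbf{a}+t(\mathbf{b}-\mathbf{a}))-\bm{J}_0\bigr)(\mathbf{b}-\mathbf{a})\,dt,
\]
where $\bm{J}_0=\bm{J}_{\mathbf{g}}(\theta_p,\varphi_q)$. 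The integral term has norm at most $(\sigma_0/2)\|\mathbf{b}-\mathbf{a}\|$, so
\[
\|\mathbf{g}(\mathbf{b})-\mathbf{g}(\mathbf{a})\|\ge(\sigma_0/2)\|\mathbf{b}-\mathbf{a}\|,
\]
which gives injectivity with a Lipschitz inverse. As an alternative to the continuity choice of $r_1$, Lemma~\ref{lem:g-regularity-inline} yields $\|\bm{J}_{\mathbf{g}}(\cdot)-\bm{J}_0\|\le C_g\|\boldsymbol{\delta}\|$, so $r=\min\{r_1,\sigma_0/(2\sqrt2\,C_g)\}$ is an explicit radius.

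\textbf{Step 3 (diffeomorphism).} $\mathbf{g}$ is $C^\infty$ and injective on $\mathcal{R}$ with everywhere invertible Jacobian. By the inverse function theorem it is a local $C^1$ diffeomorphism at every point, and global injectivity makes $\mathbf{g}^{-1}:\mathbf{g}(\mathcal{R})\to\mathcal{R}$ well defined and $C^1$. Hence $\mathbf{g}$ is a $C^1$ diffeomorphism onto $\mathbf{g}(\mathcal{R})$.

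\textbf{Step 4 (nondegenerate parallelogram).} $\mathcal{R}_{\hat{\mathbf{y}}_{p,q}}$ in \eqref{eq:regionRy} is the image of the rectangle $[-\Delta\theta,\Delta\theta]\times[-\Delta\varphi,\Delta\varphi]$ under the linear map $\bm{J}_0$. Its edges are $2\Delta\theta\,\bm{J}_0\mathbf{e}_1$ and $2\Delta\varphi\,\bm{J}_0\mathbf{e}_2$, which are linearly independent because $\bm{J}_0$ is invertible. So the image is a parallelogram with area $4\Delta\theta\Delta\varphi|\det\bm{J}_0|>0$, i.e.\ nondegenerate, consistent with $A_{\hat{\mathbf{y}}_{p,q}}$.
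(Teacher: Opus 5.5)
Your proof is correct and follows the same route as the paper. Both arguments use continuity of $\bm{J}_{\mathbf{g}}$ near the bin center to get a uniform lower bound on $|\det\bm{J}_{\mathbf{g}}|$ over small rectangles, then the inverse function theorem, then the fact that $\bm{J}_0$ maps the rectangle to a parallelogram with linearly independent edges. The one real difference is your Step~2, which tightens the paper's argument: the paper goes directly from ``$|\det\bm{J}_{\mathbf{g}}|\ge\tfrac12|\det\bm{J}_0|$ on $\mathcal{R}$'' to ``$\mathbf{g}$ is a diffeomorphism onto $\mathbf{g}(\mathcal{R})$'' by citing the inverse function theorem, which on its own gives only local invertibility (this is fixable by applying the theorem at the center and shrinking $r$). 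Your convexity and mean-value estimate $\|\mathbf{g}(\mathbf{b})-\mathbf{g}(\mathbf{a})\|\ge(\sigma_0/2)\|\mathbf{b}-\mathbf{a}\|$ supplies the global injectivity on the rectangle explicitly.
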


\begin{proof}
    Proof in Appendix~\ref{app:lemmas3}.
\end{proof}

With these technical ingredients, we state the main result:

\begin{theorem}[Approximation error of tangent-linearized GWA bin integration]
\label{thm:GWA-error-full}
There exist constants $C_3,C_4>0$ such that
\begin{equation}
\label{eq:main-bound}
\begin{aligned}
    \big|\,\bm{w}^{\text{exact}}_{n}(p,q) - \bm{\widetilde{w}}_n(p,q)\,\big|
\le& C_3\!\left(\Delta\theta^3+\Delta\varphi^3\right)\\
&+ C_4\!\left(\Delta\theta^4+\Delta\varphi^4+\Delta\theta^2\Delta\varphi^2\right).
\end{aligned}
\end{equation} Here, $\bm{w}^{\text{exact}}_{n}(p,q)$ is the exact bin integral \eqref{eq:exactIntpq}, and $\bm{\widetilde{w}}_n(p,q)$ is the GWA approximation \eqref{eq:GWAApproxFinal}.
\end{theorem}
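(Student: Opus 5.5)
I would split the error through the intermediate linearized integral $\bm{w}^{\text{lin}}_{n}(p,q)$ of \eqref{eq:linParallelogram} and use the triangle inequality:
\[
|\bm{w}^{\text{exact}}_{n}-\bm{\widetilde{w}}_n|\le |\bm{w}^{\text{exact}}_{n}-\bm{w}^{\text{lin}}_{n}|+|\bm{w}^{\text{lin}}_{n}-\bm{\widetilde{w}}_n|.
\]
The first term will produce the cubic part $C_3(\Delta\theta^3+\Delta\varphi^3)$ and the second the quartic part. Throughout, I use that $f_{p,q,n}$ is a nondegenerate 2D Gaussian, so it is smooth with $\|f\|_\infty$, $\|\nabla f\|_\infty$ and $\|\nabla^2 f\|_\infty$ bounded by constants depending only on $\bm{\Sigma}_{p,q,n}$. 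Note also $\mathbf{g}(\theta_p,\varphi_q)=\mathbf{0}$, since $\mathbf{n}'=\mathbf{n}_{p,q}$ there.

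\textbf{Linearization error.} I would pull both integrals back to the angular rectangle $\mathcal{R}$. The linear map has constant Jacobian, so $\bm{w}^{\text{lin}}_n=\iint_{\mathcal{R}} f(\bm{J}_0\bm{\delta})\,|\det\bm{J}_0|\,d\bm{\delta}$ with $\bm{J}_0=\bm{J}_{\mathbf{g}}(\theta_p,\varphi_q)$. The difference of integrands then splits into
\[
[f(\mathbf{g}(\bm{\delta}))-f(\bm{J}_0\bm{\delta})]\,|\det J_{\mathbf{g}}|+f(\bm{J}_0\bm{\delta})\,\big(|\det J_{\mathbf{g}}(\bm{\delta})|-|\det\bm{J}_0|\big).
\]
By Lemma~\ref{lem:g-regularity-inline}, $\mathbf{g}(\bm{\delta})-\bm{J}_0\bm{\delta}=\tfrac12\mathbf{Q}(\bm{\delta})+\mathbf{R}_{\mathrm g}$, which is $O(\|\bm{\delta}\|^2)$. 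By Lemma~\ref{lem:local-invertibility-inline}, $\det J_{\mathbf{g}}$ is bounded away from zero, so its absolute value is smooth, and the Hessian bound gives $|\det J_{\mathbf{g}}(\bm{\delta})|-|\det\bm{J}_0|=\bm{\ell}^\top\bm{\delta}+O(\|\bm{\delta}\|^2)$ for a fixed linear form $\bm{\ell}$.

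A naive bound then gives integrand $O(\|\bm{\delta}\|)$ over an area of order $\Delta\theta\Delta\varphi$, which is only $O(\Delta^3)$ overall. That already matches the $C_3$ term, using $\Delta\theta\Delta\varphi\cdot\max(\Delta\theta,\Delta\varphi)\le \Delta\theta^3+\Delta\varphi^3$ (AM–GM style). So I need not exploit the symmetric cancellation of odd terms over the centered rectangle. Being careful, the $O(\|\bm{\delta}\|)$ piece $f(\mathbf{0})\bm{\ell}^\top\bm{\delta}$ integrates to zero by symmetry, but the cruder bound already suffices.

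\textbf{GWA error.} Here I would compare $\iint f\,\mathbb{I}_{\mathcal{R}_{\hat{\mathbf{y}}}}$ with $\iint f\,W$. The key facts are that $\mathbb{I}$ and $W$ share total mass $A$, mean $\mathbf{0}$, and second moments $\mathbf{S}_{\hat{\mathbf{y}}}$; the uniform covariance on $[-\Delta,\Delta]$ is $\Delta^2/3$, transported by $\bm{J}_0$. I would Taylor expand $f$ about $\mathbf{0}$ to second order with a third-order remainder. The zeroth, first and second order terms cancel exactly by moment matching. What remains is $\tfrac16 D^3 f(\xi)[\hat{\mathbf{y}}]^3$ integrated against both windows, bounded by $\|D^3 f\|_\infty$ times the third absolute moments.

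For $\mathbb{I}$, the third absolute moment is $O(A\cdot\Delta^3)=O(\Delta^5)$. For the Gaussian window, the third absolute moment is $A\cdot O(\|\mathbf{S}\|^{3/2})=O(\Delta^5)$ as well. This is better than required, so the claimed bound with $C_4(\Delta\theta^4+\Delta\varphi^4+\Delta\theta^2\Delta\varphi^2)$ certainly holds: any $O(\Delta\theta\Delta\varphi(\Delta\theta^2+\Delta\varphi^2))$ is dominated by it. To stay safe I would simply stop at the second-order remainder, bounding by $\|D^2 f\|_\infty$ times second absolute moments. Because the first moments cancel, this yields exactly the quartic form $\Delta\theta\Delta\varphi(\Delta\theta^2+\Delta\varphi^2)\lesssim \Delta\theta^4+\Delta\varphi^4+\Delta\theta^2\Delta\varphi^2$.

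\textbf{Main obstacle.} The delicate point is ensuring the Gaussian window's tails, which extend outside the bin and beyond the region where the Taylor expansion is local, are controlled. Since $f$ is globally smooth with bounded derivatives on all of $\mathbb{R}^2$, the Taylor remainder bound is global, so the Gaussian moments are finite and scale as claimed. Finally, the constants $C_3$ and $C_4$ depend on $C_J$, $C_g$, the lower bound on $|\det J_{\mathbf{g}}|$, and on $\bm{\Sigma}_{p,q,n}$ through the derivative bounds of $f$. They are uniform for $\max\{\Delta\theta,\Delta\varphi\}<r$.
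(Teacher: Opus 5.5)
Your proposal is correct and takes essentially the same route as the paper's proof. It uses the same split through the linearized parallelogram integral into a geometric error, handled by the same add-and-subtract into an $f$-difference term and a $|\det J_{\mathbf{g}}|$-difference term, and a window error, handled by Taylor expanding $f$ about $\mathbf{g}(\theta_p,\varphi_q)=\mathbf{0}$ with the low-order terms cancelled by area, mean and covariance matching.

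Your steps are, if anything, more careful than the paper's corresponding ones:
\begin{itemize}
\item You bound $\bigl|\,|\det J_{\mathbf{g}}|-|\det\bm{J}_0|\,\bigr|$ by $O(\|\bm{\delta}\|)$, which still yields the cubic term. The paper instead asserts a pointwise $O(\|\bm{\delta}\|^2)$ bound ``by symmetry,'' which does not hold pointwise.
\item You control the Gaussian window's tails through global derivative bounds on $f$. The paper implicitly uses $\|\mathbf{y}-\mathbf{g}_0\|\le h$ on all of $\mathbb{R}^2$, which the unbounded window does not satisfy.
\end{itemize}
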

\begin{proof}
    Proof in Appendix~\ref{app:thm1}.
\end{proof}
 {\begin{remark}[Numerical Validation of Theorem~\ref{thm:GWA-error-full}]
We validate the error bound with the following settings:
$\bm{P}_{BS}=\mathbf{0}$, $\bm{\Sigma}_n=\sigma^2\mathbf{I}_3$
($\sigma=0.15$\,rad),
$\bm{\mu}_n=[1.2,\,0.030,\,-0.045]^\top$, and
$\Delta\theta=\Delta\varphi=\Delta$.
Fig.~\ref{fig:gwa_curves} plots the absolute error
$|\bm{w}^{\text{exact}}_{n}(p,q) - \bm{\widetilde{w}}_n(p,q)|$
vs.\ $\Delta$ on log--log axes at the equatorial bin
$(\theta_p,\varphi_q)=(\pi/2,0)$. The result shows that the empirical slope ($\approx3.9$)
exceeds the theoretical slope of~3, confirming the
$O(\Delta\theta^3+\Delta\varphi^3)$ bound and showing the GWA
is empirically tighter than guaranteed.
\end{remark}}

 {\begin{figure}[ht]
\centering
\includegraphics[width=0.9\columnwidth]{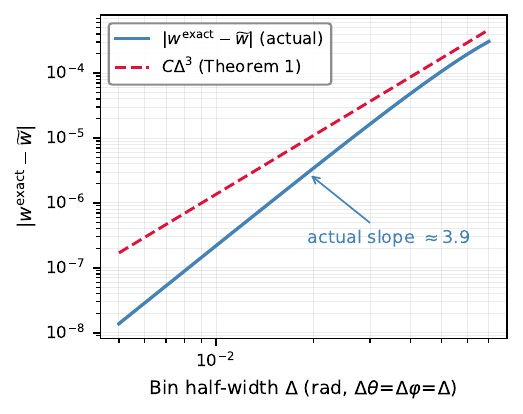}
\caption{Numerical validation of Theorem~\ref{thm:GWA-error-full}:
log--log plot of $|w^{\text{exact}}_{n}(p,q)-\widetilde{w}_n(p,q)|$
vs.\ bin half-width $\Delta$ ($\Delta\theta=\Delta\varphi=\Delta$)
at the equatorial bin. The empirical slope ($\approx3.9$) exceeds the
theoretical cubic bound slope of~3.}
\label{fig:gwa_curves}
\end{figure}}
This error analysis provides rigorous, quantitative guarantees for the GWA bin integration method, ensuring that the approximation is reliable for practical bin widths.

\section{Experiment Results and Discussion}\label{sec:experiments}

\subsection{Experimental Setup}\label{sec:exp-setup}

We evaluate PC-TGS on a city-scale, multi-modal dataset collected in City~A, China. The dataset contains over five million LiDAR points and $|\mathcal{L}|=6{,}310$ beam-wise RSRP samples within a $1{,}000\times1{,}200\,\text{m}^2$ area. The area is discretized into $10\times10\,\text{m}^2$ grids at a receiver height of $1\,\text{m}$. The base station (BS) transmits $M=32$ reference beams. We collect two rounds of measurements from the same BS with different parameter settings:
\begin{itemize}
    \item \textbf{Round 1 (training configuration):} RSRP vectors $\{\bm y_\ell \in \mathbb{R}^{M}\mid \ell\in\mathcal{L}\}$ under measurement matrix $\bm A \in \mathbb{R}^{M\times N_A}$.
    \item \textbf{Round 2 (rotated configuration):} RSRP vectors $\{\bm y^r_\ell \in \mathbb{R}^{M}\mid \ell\in\mathcal{L}\}$ under $\bm A^{r} \in \mathbb{R}^{M\times N_A}$.
\end{itemize}

Model training uses only a subset of Round-1 RSRPs from measured grids $\mathcal{L}_k$ with $|\mathcal{L}_k|=5{,}566$, as shown in the red region of  Fig.~\ref{fig:Task}. 
Unless otherwise stated, we discretize the angular domain into $N_A=N_V\times N_H=72\times 91=6{,}552$ bins.  {All experiments are carried out on an NVIDIA A100 40\,GB GPU. We implement PC-TGS in PyTorch~2.8.0 and Python~3.13.5. PC-TGS uses $N=2{,}000$ virtual scatterers and is trained with the Adam optimizer~\cite{adam2014method} using the loss in~\eqref{eq:finalloss}.} All baselines are retrained on our dataset and evaluated on the same test splits.

\subsection{Baseline Methods and Evaluation Metrics}

 {We benchmark PC-TGS against state-of-the-art methods and ablated variants. Methods are categorized as \emph{Offline} (require a training phase, then fast inference) or \emph{Online} (solved per-query at inference time, no training):}
\begin{itemize}
    \item \textbf{WNOMP}~\cite{LSCM}  {\emph{(Online)}}: An orthogonal matching pursuit-based method for recovering APS from RSRP. For unmeasured regions, it is combined with Kriging~\cite{sato2020radio}.
    \item  {\textbf{TMSBL}~\cite{zhang2011sparse} \emph{(Online)}: Temporally correlated Multiple Sparse Bayesian Learning, which groups spatially or signal-similar samples to exploit joint sparsity for APS recovery via a learned cross-measurement covariance structure.}
    \item  {\textbf{Ray Tracing}~\cite{eertmans2025differt} \emph{(Online)}: A deterministic physics-based propagation simulator using differentiable ray tracing to predict APS from geometric scene models.}
    \item  {\textbf{RadioUNet}~\cite{levie2021radiounet} \emph{(Offline)}: A CNN-based deep learning model for radio map prediction from sparse RSRP measurements and building geometry.}
    \item \textbf{MM-LSCM}~\cite{wang2025multi}  {\emph{(Offline)}}: A dual-branch neural radio radiance field multi-modal model integrating RSRP and LiDAR data for the LSCP task.
    \item \textbf{SM-LSCM}~\cite{wang2025multi}  {\emph{(Offline)}}: A single-modal variant of MM-LSCM using only RSRP data, without LiDAR integration.
   \item  {\textbf{PC-TGS} \emph{(Offline)}: The proposed framework incorporating the recursive fine-tuning (RFT) scheme, serving as the default model for best extrapolation accuracy.}
    \item  {\textbf{PC-TGS w/o RFT} \emph{(Offline)}: A lighter variant of PC-TGS that omits the recursive fine-tuning scheme, offering reduced training time with a favorable accuracy--efficiency trade-off.}
    \item  {\textbf{SM-PC-TGS} \emph{(Offline)}: A single-modal variant in which the scatterer means $\{\bm{\mu}_n\}_n^N$ are initialized from a uniform distribution without using point cloud data, and the RM scheme is omitted.}
    \item  {\textbf{SM-PC-TGS w/o RFT} \emph{(Offline)}: A single-modal variant in which the scatterer means $\{\bm{\mu}_n\}_n^N$ are initialized from a uniform distribution without using point cloud data, and the RM scheme and RFT are both omitted.}
\end{itemize}

We evaluate the performance of the proposed PC-TGS  approach across three prediction tasks:
\begin{itemize}
    \item \textbf{Sub-Task 1 (ST-1):} Predicting rotated RSRPs $ \hat{\bm y}^r_l=\bm A^r\hat{\bm x}_l$ in measured regions $\{l\in\mathcal{L}_k\}$  with $|\mathcal{L}_k|=5{,}566$.
    \item \textbf{Sub-Task 2 (ST-2):} Predicting RSRPs $ \hat{\bm y}_l=\bm A\hat{\bm x}_l$ in unmeasured regions $\{l\in \mathcal{L}_o\}$ with  $|\mathcal{L}_o|=744$.
    \item \textbf{Sub-Task 3 (ST-3):} Predicting rotated RSRPs $\hat{\bm y}^r_l=\bm A^r\hat{\bm x}_l$ in unmeasured regions $\{l\in\mathcal{L}_o\}$.
\end{itemize}
Note that the original training set only includes RSRPs data on the measured region $\{\bm y_l|l\in \mathcal{L}_k\}$.
Performance is evaluated using mean absolute error (MAE) for RSRP prediction as
\begin{align}
    \text{MAE}&=\frac{1}{M}\frac{1}{| \mathcal{L}_o|}\sum_{m=1}^M \sum_{l\in \mathcal{L}_o} |10\log(\hat{\bm{y}}_{m,l}) - 10\log({\bm{y}}_{m,l})|,
\end{align}
where $| \mathcal{L}_o|$ is the total number of grids in the test dataset and
$\hat {\bm{y}}_{l}$ denotes the predicted RSRP at the $l$-th grid.

\subsection{Ablation and Sensitivity Analysis}\label{sec:ablation}

 {Before comparing PC-TGS to external baselines, we first validate the internal design choices and characterize the model's sensitivity to key parameters.}

\subsubsection{Physical Plausibility of Learned Scatterers}

Fig.~\ref{fig:aps-depth} overlays the estimated scatterer locations on the raw point cloud of City A. The 3D Gaussians exhibit strong alignment with prominent environmental structures (building facades, corners), confirming that the relaxed-mean reparameterization successfully anchors scatterers to physically meaningful locations.

\begin{figure}[ht]
    \centering
    \includegraphics[width=0.35\textwidth]{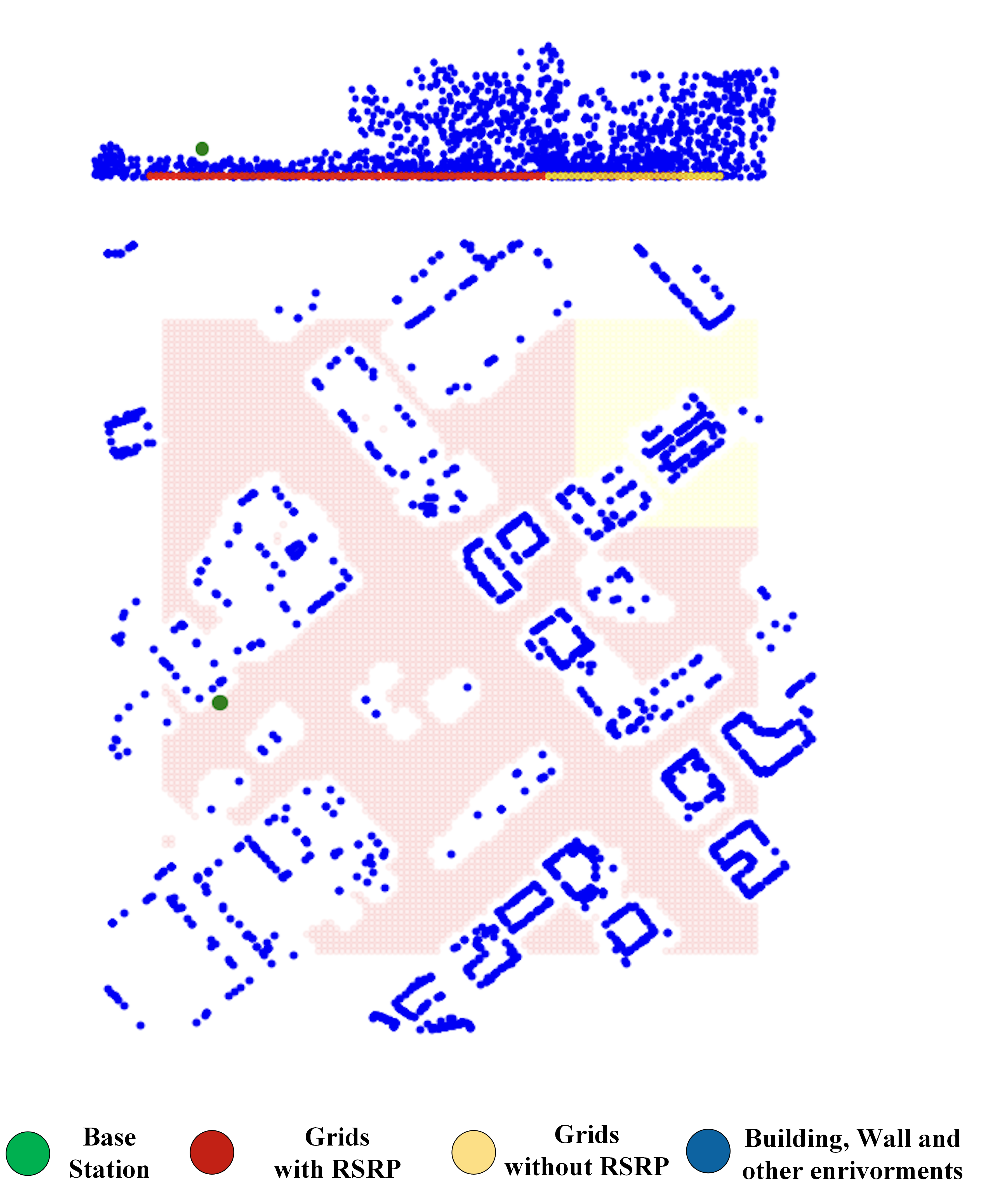}
    \caption{Visualization of extracted virtual scatterers by PC-TGS in City A, China.}
    \label{fig:aps-depth}
\end{figure}

 {\subsubsection{Sensitivity to the Number of Virtual Scatterers}
\begin{table}[htbp]
\centering
\caption{Sensitivity to  the Number of Virtual Scatterers}
\label{tab:scatterers}
\begin{tabular}{lccccc}
\toprule
$N$ & 500 & 1000 & 1500 & 2000 & 3000 \\
\midrule
ST-1 (dB) & 6.56 & 6.23 & 5.78 & 5.57 & 5.85 \\
ST-2 (dB) & 7.99 & 7.76 & 7.64 & 7.45 & 11.23 \\
ST-3 (dB) & 8.13 & 8.02 & 7.82 & 7.57 & 11.58 \\
Time (ms) & 28.91 & 47.99 & 67.42 & 73.55 & 129.29 \\
\bottomrule
\end{tabular}
\end{table}
As shown in Table~\ref{tab:scatterers}, performance improves as $N$ increases from 500 to 2000 and stabilizes in the range 1000--2000 (e.g., Sub-Task~2: 7.99\,dB at $N{=}500$ vs.\ 7.45\,dB at $N{=}2000$). Further increasing to $N{=}3000$ degrades accuracy and raises inference time significantly (from $73.55\,\mathrm{ms}$ to $ 129.29\,\mathrm{ms}$), indicating that $N{=}2000$ captures the dominant propagation structure. Crucially, $N$ controls the model's representational capacity rather than scaling with the geographic size of the environment. For larger environments, a patch-wise deployment is recommended.}

 {\subsubsection{Sensitivity to LiDAR Point Density and Geometric Noise}
\begin{table}[htbp]
\centering
\caption{Sensitivity to LiDAR Sparsification}
\label{tab:lidar_sparsify}
\begin{tabular}{lccc}
\toprule
LiDAR Setting & ST-1 (dB) & ST-2 (dB) & ST-3 (dB) \\
\midrule
100\% (Full) & 5.57 & 7.45 & 7.57 \\
50\% (Random) & 5.64 & 7.84 & 8.00 \\
None (SM-PC-TGS) & 5.84 & 8.44 & 8.79 \\
\bottomrule
\end{tabular}
\end{table}
As shown in Table~\ref{tab:lidar_sparsify}, randomly halving the point cloud causes only modest degradation, indicating that coarse geometric priors suffice. Removing LiDAR entirely (SM-PC-TGS) leads to a larger drop, confirming the value of even sparse geometric information.

To assess compatibility with image-based reconstruction (e.g., DUSt3R~\cite{wang2024dust3r} with typical outdoor accuracy 0.1--0.5\,m vs.\ LiDAR at 0.01--0.05\,m), we simulate reconstruction noise by adding zero-mean Gaussian noise at three levels:
\begin{table}[htbp]
\centering
\caption{Sensitivity to the Point-Cloud Geometric Noise}
\label{tab:pc_noise}
\begin{tabular}{lccc}
\toprule
Noise $\sigma$ & ST-1 (dB) & ST-2 (dB) & ST-3 (dB) \\
\midrule
0 (LiDAR) & 5.57 & 7.45 & 7.57 \\
0.1\,m & 5.61 & 7.43 & 7.83 \\
0.5\,m & 5.67 & 7.70 & 8.21 \\
1.0\,m & 5.79 & 8.05 & 8.45 \\
\bottomrule
\end{tabular}
\end{table}
Table~\ref{tab:pc_noise} shows that PC-TGS remains robust across all noise levels, supporting the practical use of image-derived point clouds as a LiDAR alternative.}

 {\subsubsection{Sensitivity to Angular Resolution}
\begin{table}[htbp]
\centering
\caption{Sensitivity to the Angular Resolution}
\label{tab:angular}
\begin{tabular}{lccc}
\toprule
Resolution & $72{\times}91$ & $36{\times}91$ & $72{\times}45$ \\
\midrule
ST-1 (dB) & 5.57 & 6.37 & 6.48 \\
ST-2 (dB) & 7.45 & 8.39 & 8.57 \\
ST-3 (dB) & 7.57 & 8.55 & 8.68 \\
\bottomrule
\end{tabular}
\end{table}
As shown in Table~\ref{tab:angular}, halving one angular dimension causes only moderate degradation, demonstrating that the geometric 3D representation provides robustness even under substantially coarser angular discretization.}

 {\subsubsection{Sensitivity to RSRP Measurement Density}
\begin{figure}[ht]
\centering
\begin{subfigure}{0.22\textwidth}
\centering
\includegraphics[width=\textwidth]{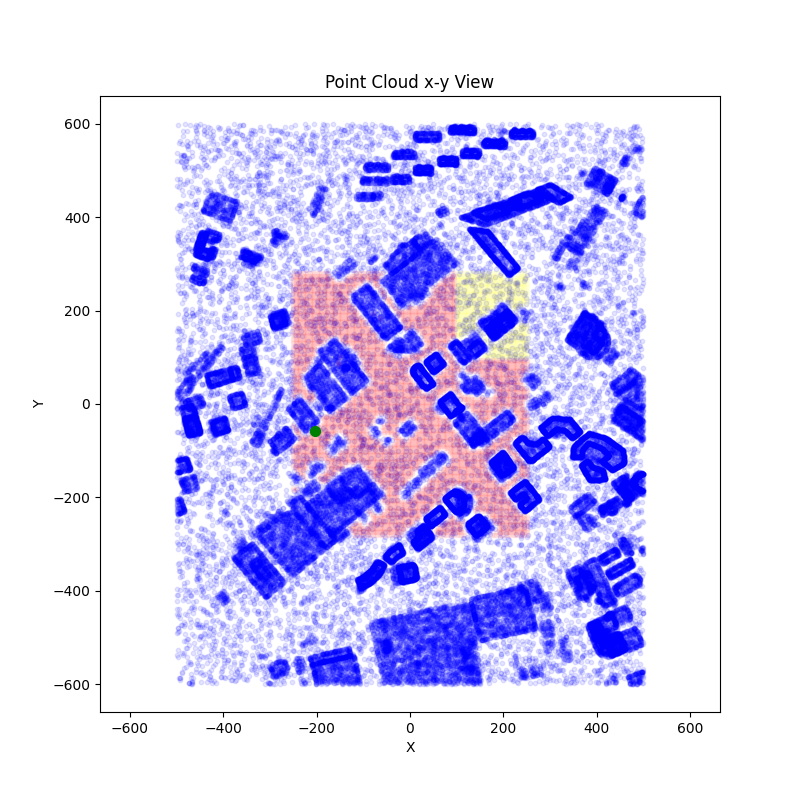}
\caption{88\% (Origin)}
\end{subfigure}
\hfill
\begin{subfigure}{0.22\textwidth}
\centering
\includegraphics[width=\textwidth]{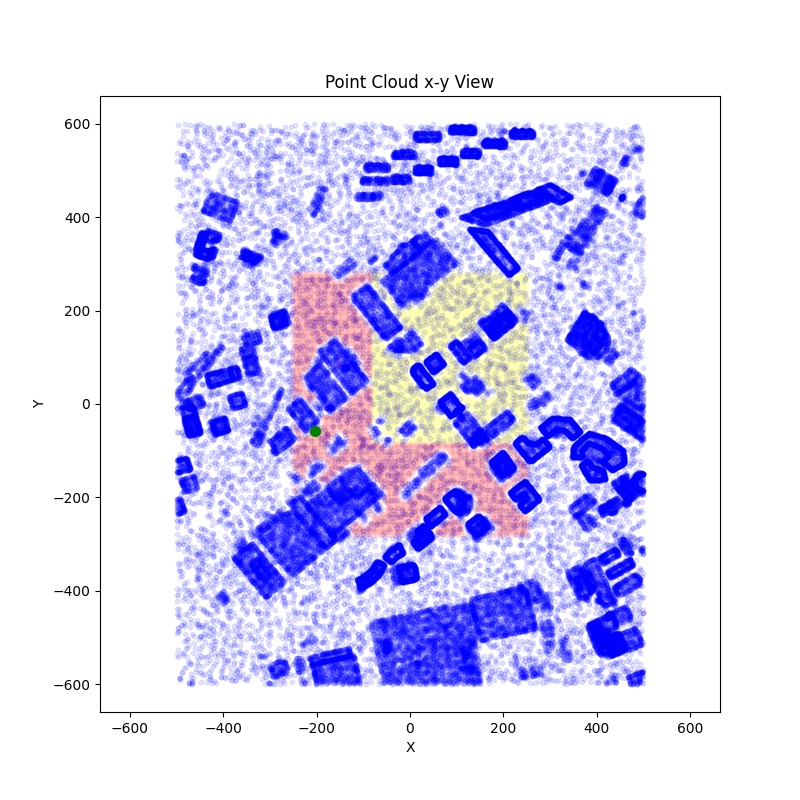}
\caption{50\%}
\end{subfigure}
\caption{Training/test data splits. Red: training; yellow: test.}
\label{fig:data_ratio}
\end{figure}
\begin{table}[htbp]
\centering
\caption{Sensitivity to  the  Training Data Ratio}
\label{tab:data_ratio}
\begin{tabular}{lcc}
\toprule
Training Ratio & 88\% (Origin) & 50\% \\
\midrule
ST-1 (dB) & 5.57 & 5.79 \\
ST-2 (dB) & 7.45 & 9.25 \\
ST-3 (dB) & 7.57 & 9.67 \\
\bottomrule
\end{tabular}
\end{table}
As shown in Fig.~\ref{fig:data_ratio} and Table~\ref{tab:data_ratio}, reducing the training data ratio from 88\% to 50\% increases MAE. However, PC-TGS at 50\% measurement density ($9.25$\,dB) still outperforms RadioUNet trained at full density ($10.31$\,dB, Table~\ref{tab:1}), demonstrating that geometry-informed 3D representation substantially reduces the measurement burden.}

 {\subsubsection{RFT Effectiveness}
As shown in Table~\ref{tab:1}, the asymmetric gains of PC-TGS over PC-TGS w/o RFT---improvements of $0.78$\,dB in Sub-Task~2 (from $8.23$ to $7.45$\,dB) and $1.05$\,dB in Sub-Task~3 (from $8.62$ to $7.57$\,dB), with only a $+0.14$\,dB change in Sub-Task~1 (from $5.43$ to $5.57$\,dB)---confirm that the boundary-expansion fine-tuning primarily benefits extrapolation while preserving performance on the original training region. This asymmetry is the expected signature of controlled boundary expansion and is inconsistent with runaway error drift.}

\subsection{Performance Comparison}\label{sec:perf}

Table~\ref{tab:1} summarizes the performance of  {PC-TGS} and all baselines in three prediction tasks where the training set contains only RSRP data from the explored region $\{\bm{y}_l\mid l\in\mathcal{L}_k\}$.

\begin{table*}[htbp]
\centering
\caption{Performance and Complexity Comparison}
\resizebox{\textwidth}{!}{%
\begin{tabular}{lccccccccc}
\toprule
\hline
\multirow{2}{*}{\textbf{Method}} & \multirow{2}{*}{\textbf{Type}} & \multirow{2}{*}{\textbf{Training Time}} & \multicolumn{2}{c}{\textbf{Sub-Task 1}} & \multicolumn{2}{c}{\textbf{Sub-Task 2}} & \multicolumn{2}{c}{\textbf{Sub-Task 3}} \\ \cline{4-9}
 & & & MAE (dB) & Infer. Time & MAE (dB) & Infer. Time & MAE (dB) & Infer. Time \\ \hline
 {\textbf{PC-TGS}} &  {Offline} &  {3.55 h} &  {5.57} &  {73.97 ms} &  {\textbf{7.45}} &  {73.55 ms} &  {\textbf{7.57}} &  {73.69 ms} \\
 {PC-TGS w/o RFT}          &  {Offline} &  {3.03 h} &  {\textbf{5.43}} &  {73.81 ms} &  {8.23} &  {73.54 ms} &  {8.62} &  {73.60 ms} \\
 {SM-PC-TGS w/o RFT}       &  {Offline} &  {2.96 h} &  {5.75} &  {73.50 ms} &  {9.18} &  {73.42 ms} &  {10.04} &  {72.66 ms} \\
 {MM-LSCM}~\cite{wang2025multi} &  {Offline} &  {4.21 h} &  {5.76} &  {384.40 ms} &  {9.12} &  {379.43 ms} &  {9.94} &  {379.38 ms} \\
 {SM-LSCM}~\cite{wang2025multi} &  {Offline} &  {3.31 h} &  {5.89} &  {98.96 ms} &  {10.02} &  {99.24 ms} &  {11.14} &  {97.39 ms} \\
 {RadioUNet}~\cite{levie2021radiounet} &  {Offline} &  {0.8 h} &  {--} &  {--} &  {10.31} &  {2.08 ms} &  {--} &  {--} \\
 {WNOMP}~\cite{LSCM} &  {Online} &  {--} &  {6.12} &  {1.49 ms} &  {13.35} &  {8031.95 ms} &  {15.07} &  {8016.62 ms} \\
 {TMSBL}~\cite{zhang2011sparse} &  {Online} &  {--} &  {6.03} &  {133.71 ms} &  {11.69} &  {9{,}743.23 ms} &  {11.77} &  {9{,}729.18 ms} \\
 {Ray Tracing}~\cite{eertmans2025differt} &  {Online} &  {--} &  {--} &  {--} &  {14.37} &  {8099.93 ms} &  {16.77} &  {7521.13 ms} \\
\hline\bottomrule
\end{tabular}%
}
\label{tab:1}
\end{table*}\footnotetext{{Ray
Tracing is included in Sub-Tasks 2 and 3 only, as Sub-Task 1 shares the
training footprint and this baseline cannot be meaningfully evaluated there.
RadioUNet cannot handle the task where the APS needs to be recovered from
the RSRP, hence is not included in Sub-Tasks 1 and 3.}}

 {The results in Table~\ref{tab:1} demonstrate that PC-TGS achieves the lowest MAE in the extrapolation sub-tasks (7.45\,dB in Sub-Task~2 and 7.57\,dB in Sub-Task~3), while PC-TGS w/o RFT achieves the best performance in Sub-Task~1 (5.43\,dB). This reflects that the recursive fine-tuning primarily benefits extrapolation. Among offline-trained baselines, MM-LSCM (9.12/9.94\,dB) and SM-LSCM (10.02/11.14\,dB) are outperformed by a large margin in Sub-Tasks~2--3, attributable to PC-TGS's geometric initialization via point clouds and 3DGS pipeline which ensure physically consistent scatterer placement and radio propagation mechanism. RadioUNet (10.31\,dB, Sub-Task~2), despite its CNN-based design, cannot leverage 3D propagation geometry and is limited to predicting scalar RSRP rather than the full APS. Among online methods, WNOMP (13.35/15.07\,dB), TMSBL (11.69/11.77\,dB), and Ray Tracing (14.37/16.77\,dB) are substantially worse in the extrapolation sub-tasks. We can infer that  WNOMP and TMSBL cannot predict beyond measured locations, and Ray Tracing suffers from assumed material parameters that diverge from true propagation physics. These results confirm that PC-TGS achieves superior accuracy by jointly learning a geometry-aware 3D propagation representation from measurements and environmental priors.}

 {\subsection{Complexity Analysis}\label{sec:complexity}}

 {As shown in Table~\ref{tab:1}, PC-TGS requires 3.55 hours of offline training (including the RFT phase), after which inference takes around $74$\,ms per query, approximately  representing a $5\times$ speedup over MM-LSCM and $1.3\times$ over SM-LSCM. While WNOMP has negligible per-query cost in Sub-Task~1 ($1.49$\,ms), it requires $>8$\,s per query when applied to extrapolation via Kriging in Sub-Tasks~2--3. TMSBL and Ray Tracing similarly incur 7.5--10\,s per query. The lighter variant PC-TGS w/o RFT reduces training to 3.03 hours by omitting the RFT phase while keeping inference time unchanged. These results confirm that PC-TGS achieves a favorable accuracy--efficiency trade-off among all methods that can address the LSCP task.}

\section{Conclusion}\label{sec:conclusion}

We proposed \emph{point-cloud-assisted tangent Gaussian splatting}
(PC-TGS), a physically grounded framework that extrapolates the channel APS to grids where no signal measurements are available.
PC-TGS fuses LiDAR geometry and RSRP data in an end-to-end 
differentiable pipeline integrating relaxed-mean scatterer parameterization,
tangent-plane projection, and depth-aware electromagnetic synthesis. This is made possible by a novel 
GWA-based closed-form bin integration.  Experiments on a city-scale data
set show clear accuracy gains over existing single- and
multi-modal baselines.   {Future work will extend PC-TGS to multi-site
scenarios, incorporate temporal dynamics, pursue further runtime
optimisation for real-time digital-twin applications, and investigate
adaptive mechanisms for selecting the number of virtual scatterers $N$
based on scene complexity.}

\appendix

\subsection{Proof of Lemma~\ref{lem:g-regularity-inline}}
\label{app:lemmas1}

For any $(\theta',\varphi') \in \mathcal{R}$ the geodesic distance $\gamma$ between $\bm{n}'$ and $\bm{n}_0$
satisfies $\gamma \le \rho$. We assume the bin is sufficiently small so that $\rho < \pi/2$. Hence, we have
\[
s(\theta',\varphi') \;=\; \bm{n}'^{\top}\bm{n}_0 \;=\; \cos\gamma \;\ge\; \cos\rho \;=:\; c_0 \;>\; 0.
\]
This gives a uniform positive lower bound on $s$ over $\mathcal{R}$. Therefore,  the quotient $\bm{h}=\bm{n}'/s - \bm{n}_0$ is smooth on $\mathcal{R}$. Since  that the orthogonality of $\bm{U}_{p,q}$ and linear projection by $\bm{P}$ preserve smoothness,  $\bm{g}$ is smooth on $\mathcal{R}$.

Next, we bound the first-order derivative at any $i\in\{\theta,\varphi\}$ by
\begin{equation}\label{eq:grad-h-bound}
\|\partial_i \bm{h}\|
\le \frac{\|\partial_i \bm{n}'\|}{c_0} + \frac{|\partial_i s|}{c_0^2}\le \frac{\|\partial_i \bm{n}'\|}{c_0} + \frac{\|\partial_i \bm{n}'\|}{c_0^2}
\le \frac{1}{c_0} + \frac{1}{c_0^2}.
\end{equation}
Let $\nabla \bm{h} = [\partial_\theta \bm{h},\ \partial_\varphi \bm{h}] \in \mathbb{R}^{3\times 2}$ be the Jacobian.
Since $\bm{g} = \bm{P}\bm{U}_{p,q}^{\!\top}\bm{h}$, we have $\bm{J}_{\bm{g}} = \bm{P}\bm{U}_{p,q}^{\!\top}\nabla \bm{h}$.
Because $\bm{U}_{p,q}$ is orthogonal, $\|\bm{U}_{p,q}^{\!\top}\|_2=1$, and since $\bm{P}$ is a row-selector,
$\|\bm{P}\|_2\le 1$. Hence
\begin{equation}\label{eq:Jg-le-grad}
\|\bm{J}_{\bm{g}}\|_2 \le \|\nabla \bm{h}\|_2.
\end{equation}
A conservative bound on the operator norm of a $3\times 2$ matrix is the sum of its column norms:
\begin{equation}
    \begin{aligned}
        \|\nabla \bm{h}\|_2 &
\le \left(\frac{1}{c_0}+\frac{1}{c_0^2}\right) + \left(\frac{1}{c_0}+\frac{1}{c_0^2}\right)
=  C_J.
    \end{aligned}
\end{equation}

Combining with \eqref{eq:Jg-le-grad} yields
\[
\sup_{(\theta',\varphi')\in\mathcal{R}}\|\bm{J}_{\bm{g}}(\theta',\varphi')\|
\le \sup_{(\theta',\varphi')\in\mathcal{R}}\|\nabla \bm{h}(\theta',\varphi')\|
\le C_J.
\]

The bound for the second derivative follows the same fashion,  therefore, we omit it due to the page limit.

Finally, we show the second-order Taylor expansion.
Let $\bm{\delta} := [\delta\theta,\ \delta\varphi]^\top$ and $(\theta',\varphi') = (\theta_p,\varphi_q)+\bm{\delta} $.
By Taylor's theorem for vector-valued functions,
\[
\bm{g}(\theta',\varphi') = \bm{g}_0 + \bm{J}_0\,\bm{\delta} + \tfrac{1}{2}\,\mathcal{Q}_0(\bm{\delta}) + \bm{R}(\bm{\delta}),
\]
where $\bm{g}_0 := \bm{g}(\theta_p,\varphi_q)$, $\bm{J}_0 := \bm{J}_{\bm{g}}(\theta_p,\varphi_q)$,
$\mathcal{Q}_0$ is the bilinear form induced by $D^2 \bm{g}(\theta_p,\varphi_q)$, and $\bm{R}(\delta)$ is the high-order  remainder term. 
Because $\bm{g}$ is $C^\infty$ on a compact neighborhood of $\mathcal{R}$, we have
\begin{equation}
    \begin{aligned}
        &\|\mathcal{Q}_0(\bm{\delta})\| \le \|D^2 \bm{g}(\theta_p,\varphi_q)\|\,\|\bm{\delta}\|^2 \le C_g \|\bm{\delta}\|^2,\\
&\|\bm{R}(\bm{\delta})\| \le C_g \|\bm{\delta}\|^3.
    \end{aligned}
\end{equation}
This finishes the proof.

\subsection{Proof of Lemma~\ref{lem:local-invertibility-inline}}
\label{app:lemmas3}

Let us fix the bin center $(\theta_p, \varphi_q)$ and set $\bm{J}_0 := \bm{J}_{\mathbf{g}}(\theta_p, \varphi_q)$. By assumption, $\bm{J}_0$ is invertible, so $\det\bm{J}_0 \neq 0$.

Since $\mathbf{g}$ is $C^1$ in a neighborhood of $(\theta_p, \varphi_q)$ (by Lemma~\ref{lem:g-regularity-inline}), its Jacobian $\bm{J}_{\mathbf{g}}(\theta, \varphi)$ and determinant $\det\bm{J}_{\mathbf{g}}(\theta, \varphi)$ are continuous functions of $(\theta, \varphi)$. Therefore, there exists $r > 0$ such that for all $(\theta, \varphi)$ with $|\theta - \theta_p| < r$ and $|\varphi - \varphi_q| < r$, the Jacobian remains invertible and
\[
|\det\bm{J}_{\mathbf{g}}(\theta, \varphi)| \geq \frac{1}{2} |\det\bm{J}_0| > 0.
\]
By the inverse function theorem, this implies that $\mathbf{g}$ is a $C^1$ diffeomorphism from the rectangle
\[
\mathcal{R} := [\theta_p - \Delta\theta,\,\theta_p + \Delta\theta] \times [\varphi_q - \Delta\varphi,\,\varphi_q + \Delta\varphi]
\]
onto its image $\mathbf{g}(\mathcal{R})$, for any $\Delta\theta, \Delta\varphi < r$.

Now, consider the image of $\mathcal{R}$ under the first-order (linear) Taylor expansion of $\mathbf{g}$ at $(\theta_p, \varphi_q)$:
\[
\mathbf{g}(\theta, \varphi) \approx \mathbf{g}(\theta_p, \varphi_q) 
+ \bm{J}_0
\begin{bmatrix}
    \theta - \theta_p \\
    \varphi - \varphi_q
\end{bmatrix}.
\]
The rectangle $\mathcal{R}$ in $(\theta, \varphi)$-space, defined by $|\theta - \theta_p| \leq \Delta\theta$, $|\varphi - \varphi_q| \leq \Delta\varphi$, is mapped by this linear transformation to a parallelogram in the tangent plane. Specifically, the sides of the parallelogram are given by the vectors
\[
\bm{J}_0 \begin{bmatrix} \Delta\theta \\ 0 \end{bmatrix}
\quad\text{and}\quad
\bm{J}_0 \begin{bmatrix} 0 \\ \Delta\varphi \end{bmatrix},
\]
which are the images of the edges of the rectangle under the Jacobian. Because $\bm{J}_0$ is invertible, this parallelogram is nondegenerate: it has positive area and does not collapse to a line or a point.

Therefore, for sufficiently small bins, $\mathbf{g}$ is a $C^1$ diffeomorphism from $\mathcal{R}$ onto its image, and the linear image of $\mathcal{R}$ is a nondegenerate parallelogram in the tangent plane as claimed.

\qed

\subsection{Proof of Theorem~\ref{thm:GWA-error-full}}
\label{app:thm1} 
Set
\[
\bm{J} = \bm{J}_{\mathbf{g}}(\theta_p, \varphi_q), \qquad
\boldsymbol{\delta} = \begin{bmatrix} \theta' - \theta_p \\ \varphi' - \varphi_q \end{bmatrix}.
\]

We split the error:
\[
w^{\mathrm{exact}} - \widetilde{w} = E_{\mathrm{geo}} + E_{\mathrm{win}},
\]
with
\begin{align*}
E_{\mathrm{geo}} &= \iint_{\mathcal{R}} [f(\mathbf{g}(\theta',\varphi'))\,|\det\bm{J}_{\mathbf{g}}(\theta',\varphi')|\\
&\qquad\qquad- f(\mathbf{g}(\theta_p, \varphi_q) + \bm{J}\boldsymbol{\delta})\,|\det\bm{J}| ]\,\mathrm{d}\theta'\mathrm{d}\varphi', \\
E_{\mathrm{win}} &= \iint_{\mathcal{R}} f(\mathbf{g}(\theta_p, \varphi_q) + \bm{J}\boldsymbol{\delta})\,|\det\bm{J}|\,\mathrm{d}\theta'\mathrm{d}\varphi' \\
&\qquad\qquad- \int_{\mathbb{R}^2} f(\mathbf{y})\,W(\mathbf{y})\,\mathrm{d}\mathbf{y}.
\end{align*}

We first bound $E_{\mathrm{geo}}$. Add and subtract  $f(\mathbf{g}(\theta_p, \varphi_q) + \bm{J}\boldsymbol{\delta})\,|\det\bm{J}_{\mathbf{g}}(\theta',\varphi')|$ inside the integrand, we have
\begin{align*}
\small
&E_{\mathrm{geo}} \\
&= \iint_{\mathcal{R}} \Big( [f(\mathbf{g}(\theta',\varphi')) - f(\mathbf{g}(\theta_p, \varphi_q) + \bm{J}\boldsymbol{\delta}) ]\,|\det\bm{J}_{\mathbf{g}}(\theta',\varphi')| \\
&\qquad+ f(\mathbf{g}(\theta_p, \varphi_q) + \bm{J}\boldsymbol{\delta})\, [|\det\bm{J}_{\mathbf{g}}(\theta',\varphi')| 
- |\det\bm{J}| ] \Big)\,\mathrm{d}\theta'\mathrm{d}\varphi'.
\end{align*}
By the triangle inequality,
$|E_{\mathrm{geo}}| \leq I_1 + I_2$,
where
{\small
\begin{align*}
I_1 &= \iint_{\mathcal{R}} |f(\mathbf{g}(\theta',\varphi')) - f(\mathbf{g}(\theta_p, \varphi_q) + \bm{J}\boldsymbol{\delta})|\,|\det\bm{J}_{\mathbf{g}}(\theta',\varphi')|\,\mathrm{d}\theta'\mathrm{d}\varphi', \\
I_2 &= \iint_{\mathcal{R}} |f(\mathbf{g}(\theta_p, \varphi_q) + \bm{J}\boldsymbol{\delta})|\,| |\det\bm{J}_{\mathbf{g}}(\theta',\varphi')| - |\det\bm{J}| |\,\mathrm{d}\theta'\mathrm{d}\varphi'.
\end{align*}}
By Lemma~\ref{lem:g-regularity-inline}, we have 
\[
|f(\mathbf{g}(\theta',\varphi')) - f(\mathbf{g}(\theta_p, \varphi_q) + \bm{J}\boldsymbol{\delta})| 
\leq L_f C_g \|\boldsymbol{\delta}\|^2,
\]
for some constant $L_f$ related to the Lipschitz constant of $f(\cdot)$. Also, by Lemma~\ref{lem:local-invertibility-inline}, $|\det\bm{J}_{\mathbf{g}}(\theta',\varphi')|$ is bounded above by some constant $C_J$ in $\mathcal{R}$. Therefore,
\[
I_1 \leq L_f C_g C_J \iint_{\mathcal{R}} \|\boldsymbol{\delta}\|^2\,\mathrm{d}\theta'\mathrm{d}\varphi'.
\]

On the compact image of the bin, $f$ is bounded, i.e.,
$|f(\mathbf{g}(\theta_p, \varphi_q) + \bm{J}\boldsymbol{\delta})| \leq C_f$
for some $C_f$. By the smoothness of $\mathbf{g}(\cdot)$, $|\det\bm{J}_{\mathbf{g}}|$ is $C^1$, so by Taylor expansion and the
symmetry of the bin $\mathcal{R}$ about $(\theta_p, \varphi_q)$, we have
\[
| |\det\bm{J}_{\mathbf{g}}(\theta', \varphi')| - |\det\bm{J}| | \leq C_{J,2} \|\boldsymbol{\delta}\|^2
\]
for some $C_{J,2}$ and all $(\theta',\varphi') \in \mathcal{R}$. Thus,
$I_2 \leq C_f C_{J,2} \iint_{\mathcal{R}} \|\boldsymbol{\delta}\|^2\,\mathrm{d}\theta'\mathrm{d}\varphi'$.
Since $\iint_{\mathcal{R}} \|\boldsymbol{\delta}\|^2\,\mathrm{d}\theta'\mathrm{d}\varphi' = O(\Delta\theta^3 + \Delta\varphi^3)$,
adding the bounds for $I_1$ and $I_2$ yields
$|E_{\mathrm{geo}}| \leq C_3 (\Delta\theta^3 + \Delta\varphi^3)$
for some constant $C_3>0$.

Next we bound $E_{\mathrm{win}}$.  Let $\mathbf{g}_0 = \mathbf{g}(\theta_p, \varphi_q)$. Substitute $\bm{y} = \mathbf{g}_0 + \bm{J}\boldsymbol{\delta}$, so
\[
E_{\mathrm{win}} = \int_{\mathcal{P}} f(\bm{y})\,\mathrm{d}\bm{y} - \int_{\mathbb{R}^2} f(\bm{y})\,W(\bm{y})\,\mathrm{d}\bm{y},
\]
where $\mathcal{P} = \bm{J}(\mathcal{R}_0) + \mathbf{g}_0$ is the parallelogram image of the bin. Expand $f$ in a Taylor series at $\mathbf{g}_0$:
\begin{align*}
E_{\mathrm{win}} 
&= \sum_{k=0}^2 \left( \int_{\mathcal{P}} T_k(\bm{y})\,\mathrm{d}\bm{y} - \int_{\mathbb{R}^2} T_k(\bm{y})\,W(\bm{y})\,\mathrm{d}\bm{y} \right) \\
&\quad + \left[ \int_{\mathcal{P}} R_3(\bm{y})\,\mathrm{d}\bm{y} - \int_{\mathbb{R}^2} R_3(\bm{y})\,W(\bm{y})\,\mathrm{d}\bm{y} \right],
\end{align*}
where $T_k$ is the $k$-th degree Taylor term and $|R_3(\bm{y})| \leq L_3 \|\bm{y} - \mathbf{g}_0\|^3$ for some $L_3 > 0$.

By the GWA construction, $W$ is a Gaussian window with area and covariance \emph{exactly} matching those of the parallelogram $\mathcal{P}$ while the first-order terms are both zero due to symmetry. As a result,
\[
|E_{\mathrm{win}}| \leq \int_{\mathcal{P}} |R_3(\bm{y})|\,\mathrm{d}\bm{y} + \int_{\mathbb{R}^2} |R_3(\bm{y})|\,W(\bm{y})\,\mathrm{d}\bm{y}.
\]
Since $\|\bm{y} - \mathbf{g}_0\| \leq h = O(\Delta\theta + \Delta\varphi)$ and the area of $\mathcal{P}$ is $O(\Delta\theta\Delta\varphi)$, both integrals are $O(\Delta\theta^4 + \Delta\varphi^4 + \Delta\theta^2\Delta\varphi^2)$.
Therefore, for some $C_4 > 0$,
\[
|E_{\mathrm{win}}| \leq C_4 (\Delta\theta^4 + \Delta\varphi^4 + \Delta\theta^2\Delta\varphi^2).
\]

Adding the bounds for $E_{\mathrm{geo}}$ and $E_{\mathrm{win}}$ completes the proof.
\qed

\bibliographystyle{IEEEtran}
\bibliography{refs}

\end{document}